\numberwithin{equation}{section}
\theoremstyle{plain}
\newtheorem{theorem}{Theorem}
\numberwithin{theorem}{section}
\newtheorem{proposition}[theorem]{Proposition}
\theoremstyle{definition}
\newtheorem{definition}[theorem]{Definition}
\newcommand{\<}{\langle}
\renewcommand{\>}{\rangle}
\renewcommand{\(}{\left(}
\renewcommand{\)}{\right)}
\renewcommand{\[}{\left[}
\renewcommand{\]}{\right]}
\newcommand\Cb{\mathds{C}}
\newcommand\Eb{\mathds{E}}
\newcommand\Fb{\mathds{F}}
\newcommand\Pb{\mathds{P}}
\newcommand\Rb{\mathds{R}}
\newcommand\Fc{\mathscr{F}}
\newcommand\Nc{\mathscr{N}}
\newcommand\Oc{\mathscr{O}}
\newcommand\eps{\varepsilon}
\newcommand\Om{\Omega}
\newcommand\sig{\sigma}
\newcommand\lam{\lambda}
\newcommand\del{\delta}
\newcommand\rhob{\bar{\rho}}
\renewcommand\d{\partial}
\newcommand{\ind}{\perp \! \! \! \perp}
\newcommand\ii{\mathtt{i}}
\newcommand\dd{\mathrm{d}}
\newcommand\ee{\mathrm{e}}
\renewcommand\Re{\textup{Re}\,}
\renewcommand\Im{\textup{Im}\,}
\begin{document}

\title{On Carr and Lee's correlation immunization strategy}

\author{
Jimin Lin
\thanks{
Department of Applied Mathematics, University of Washington.
\textbf{e-mail}: \url{jmlin@uw.edu }}
\and
Matthew Lorig
\thanks{
Department of Applied Mathematics, University of Washington.
\textbf{e-mail}: \url{mlorig@uw.edu}}
}

\date{This version: \today}

\maketitle

\begin{abstract}
In their seminal work \cite{rrvd} show how to robustly price and replicate a variety of claims written on the quadratic variation of a risky asset under the assumption that the asset's volatility process is independent of the Brownian motion that drives the asset's price.  Additionally, they propose a correlation immunization strategy that minimizes the pricing and hedging error that results when the correlation between the risky asset's price and volatility is nonzero.  In this paper,
{we show that the correlation immunization strategy is the only strategy among the class of strategies discussed in \cite{rrvd} that results in real-valued hedging portfolios when the correlation between the asset's price and volatility is nonzero}.
Additionally, we perform a number of Monte Carlo experiments to test the effectiveness of Carr and Lee's immunization strategy.  Our results indicate that the correlation immunization method is an effective means of reducing pricing and hedging errors that result from nonzero correlation.
\end{abstract}

\noindent
\textbf{Key words}: robust pricing, quadratic variation, volatility, variance.

%
%

\section{Introduction}
\label{sec:intro}
\textit{Volatility} is a catch-all phrase used by practitioners and academics to quantify the uncertainty of a risky asset's value.  Common measures of volatility include Black-Scholes implied volatility, Bachelier implied volatility, instantaneous volatility, and realized quadratic variation (also called ``realized variance'').  A \textit{volatility derivative} is any derivative asset whose payoff depends on some measure of volatility.  For the purpose of this paper, we will focus on volatility derivatives whose payoffs are functions of the realized quadratic variation of the $\log$ price of a risky asset.    Common examples of volatility derivatives of this sort include variance swaps, volatility swaps, and puts and calls on realized variance.  Volatility derivatives play a few important roles.  First, they can be used as hedging instruments for European options.  Second, as instantaneous volatility is known to be negatively correlated with price (in equity markets), a long position in a volatility or variance swap can be used to protect a portfolio's value in the case of a market crash.

Like all derivative assets, volatility derivatives could in principle be priced by choosing a parametric model for the underlying risky asset and computing risk-neutral expectations of payoffs either analytically (if possible) or by Monte Carlo simulation.  However, this parametric approach leads to a great deal of model misspecification risk.  An alternative nonparametric approach, which has enjoyed great success, is to assume only that the underlying risky asset has continuous sample paths and attempt to price volatility derivatives  \textit{relative} to the value of (liquidly traded and efficiently priced) European calls and puts.  The first step in this direction was taken by \cite{neuberger} and \cite{dupire1993model}, who showed independently that the fair strike of a variance swap has the same value as a European $\log$ contract on the underlying risky asset.  They further showed that the floating leg of a variance swap could be replicated by holding a European $\log$ contract and keeping a fixed dollar amount in the underlying risky asset.  Although $\log$ contracts do not trade, they can in theory be synthesized from a continuous strip of calls and puts, as described in \cite{carrmadan1998}.

Another significant step in the nonparametric valuation of volatility derivatives was taken by \cite{rrvd}, who showed how to price and replicate  a large class of nonlinear payoffs of realized variance under the additional assumption that volatility process of the underlying risky asset evolves independently of the Brownian motion that drives the asset's price.

{
A number of papers have built upon the methodology of \cite{rrvd}, whose work was first made available as a discussion paper in 2005.
For example, \cite{zhu-thesis} uses Carr and Lee's methodology to price options on levered exchange traded funds relative to European options on the underlying.
\cite{bscpv} provides pricing and hedging strategies for hybrid barrier-style claims on price and volatility.
\cite{friz2005valuation} perform a detailed mathematical analysis of volatility swaps and calls on variance and show that the latter leads to
an ill-posed problem that can be solved using regularization techniques.
\cite{tvo} robustly price and replicate so-called ``target volatility'' options.
\cite{vs-tclp} and \cite{vs-tcmp} show how to robustly price variance swaps in the presence of both stochastic volatility and jumps.}

{
The work of Carr and Lee has highlighted the need for robust pricing and hedging methods for all sorts of path-dependent options -- not only those related to realized variance.  For example, \cite{forde2010robust} give robust approximations for the prices of arithmetic Asian options in the presence of stochastic volatility.
\cite{spx-vix} derives a model-free link between options on the SPX index and options on the VIX.
And \cite{pcs} provide pricing and replication strategies for a variety of barrier-style claims.
}

Fully aware that instantaneous volatility is empirically negatively correlated with price (the leverage effect), \cite{rrvd} developed a \textit{correlation immunization} strategy, and showed formally that the pricing and hedging error associated with this strategy was on the order of correlation \textit{squared} (i.e., they succeeded in eliminating the first order effects of correlation).
{
In this paper, we show that, among the pricing and hedging strategies discussed in \cite{rrvd}, the correlation immunization strategy is the only strategy that results in real-valued hedging portfolios when the correlation between the risky asset's price and volatility is nonzero.  This is clearly an important consideration for the practical implementation of Carr and Lee's pricing and hedging methodology, as assets in the real world are not complex-valued.
}

Assuming one uses the correlation immunization strategy, the asymptotic result of Carr and Lee tells us how fast the pricing and hedging errors go to zero as correlation goes to zero, but it does not tell us, for a fixed correlation, how large the pricing and hedging errors may be.  {One of the purposes} of this paper is to carry out a numerical investigation of the pricing and hedging errors associated with the correlation immunization strategy for fixed values of correlation by performing a series of Monte Carlo tests.

The rest of this paper proceeds as follows:
In Section \ref{sec:model} we introduce a nonparametric model for a risky asset.
In Section \ref{sec:review} we review the main results from \cite{rrvd} {and prove that the correlation immunization strategy always results in real-valued hedging portfolios}.
And in Section \ref{sec:monte-carlo}, we present the results of our Monte Carlo simulations.
Lastly, in Section \ref{sec:conclusion} we offer some closing remarks.


%
%

\section{Market model}
\label{sec:model}
Throughout this paper, we work in the setting of \cite{rrvd}.  Specifically, we consider a frictionless market (i.e., no transaction costs) and fix an arbitrary but finite time horizon $T<\infty$.  For simplicity, we assume zero interest rates, no arbitrage, and take as given an equivalent martingale measure (EMM) $\Pb$ chosen by the market on a complete filtered probability space $(\Om,\Fc,\Fb,\Pb)$.  The filtration $\Fb=(\Fc_t)_{0 \leq t \leq T}$ represents the history of the market.

Let $B = (B_t)_{0 \leq t \leq T}$ represent the value of a zero-coupon bond maturing at time $T$.  As the risk-free rate of interest is zero by assumption, we have $B_t = 1$ for all $t \in [0,T]$.  Let $S = (S_t)_{0 \leq t \leq T}$ represent the value of a risky asset.  We assume $S$ is strictly positive and has continuous sample paths.  To rule out arbitrage, the price of the asset $S$ must be a martingale under the pricing measure $\Pb$.  As such, there exists a non-negative, $\Fb$-adapted stochastic process $\sig = (\sig_t)_{0 \leq t \leq T}$, called the \textit{volatility process} such that
\begin{align}
\dd S_t
	&=	\sig_t S_t \dd W_t , &
S_0
	&>	0 , 
\end{align} 
where $W$ is a $(\Pb,\Fb)$-Brownian motion.  Without loss of generality, we may decompose $W$ as follows
\begin{align}
W
	&=	\rhob W^1 + \rho W^2 , &
\rhob
	&:=	\sqrt{1-\rho^2} , &
|\rho|
	&\leq 1 . \label{eq:W-decomp}
\end{align}
where $W^1$ and $W^2$ are independent $(\Pb,\Fb)$-Brownian motions and where the volatility process $\sig$ and the Brownian motion $W^1$ are independent (i.e., $\sig \ind W^1$).  We shall refer to the parameter $\rho$ as the \textit{correlation}.  Note that when $\rho = 0$ we have $W = W^1$ and hence $\sig \ind W$.

It will be convenient to introduce $X = (X_t)_{0 \leq t \leq T}$, the $\log$ price process
\begin{align}
X_t
	&=	\log S_t .
\end{align}
As $S$ is strictly positive by assumption, the process $X$ is well-defined and finite for all $t \in [0,T]$.  A simple application of It\^o's Lemma yields
\begin{align}
\dd X_t
	&=	-\tfrac{1}{2} \sig_t^2 \dd t + \sig_t \dd W_t , &
X_0
	&=	\log S_0 . \label{eq:dX}
\end{align}
In this paper, we shall be concerned with path-dependent claims with payoffs at time $T$ of the form $\varphi(\<X\>_T)$, where $\<X\>$ denotes the quadratic variation process of $X$.  Note that
\begin{align}
\<X\>_T
	&=	\int_0^T \sig_t^2 \dd t . \label{eq:qv}
\end{align}
Let $V = (V_t)_{t \leq T}$ be the value of a claim with payoff $\varphi(\<X\>_T)$.  Under the assumption of no arbitrage and zero interest rates, we have
\begin{align}
V_t
	&=	\Eb_t \varphi(\<X\>_T) ,
\end{align}
where $\Eb_{t} \, \cdot \, := \Eb[ \, \cdot \, | \Fc_{t}]$ denotes the $\Fc_t$-conditional expectation under $\Pb$.

We assume that a European call or put with maturity $T$ trades at every strike $K \in (0,\infty)$.
As \cite{carrmadan1998} show, if $g : \Rb \to \Rb$ is a difference of convex functions, then the $T$-maturity European claim with payoff $g(X_T)$ can be perfectly replicated with a static portfolio of bonds $B$, shares of the underlying $S$ and a basket of calls and puts.  Thus, we may (and do) treat all $T$-maturity European claims on $X$ as traded assets.  The price of a $T$-maturity European claim with payoff $g(X_T)$ is equal to the value of the static replicating portfolio and is therefore observable.

%
%

\section{Main results from \cite{rrvd}}
\label{sec:review}

In this section, we briefly review the main results from \cite{rrvd}.  First, in Section \ref{sec:ind}, we present exact pricing and replications results for the case of zero correlation.  Then, in Section \ref{sec:corr}, we present Carr and Lee's correlation immunization strategy for approximate pricing and replication in the case of nonzero correlation.  We also present what we believe is a new result (Proposition \ref{thm:real}), which establishes that, when the volatility derivative payoff is real, so is the associated correlation immunized hedging strategy.


\subsection{Exact pricing and replication under zero correlation}
\label{sec:ind}


In what follows, we shall consider claims with $\Cb$-valued payoffs.  The pricing and hedging results we present should be understood to hold for the real and imaginary parts separately.  We begin with a proposition that relates the price of an exponential claim on realized quadratic variation $\<X\>_T$ to the price of an exponential claim on $\log$ price $X_T$ when $\rho = 0$.

\begin{proposition}[Pricing of Exponential Claims]
\label{thm:pricing}
Assume $\rho = 0$.  Define a function $u^\pm: \Cb \to \Cb$ as follows
\begin{align}
u^\pm(s)
	&=	\ii \( - \tfrac{1}{2} \pm \sqrt{\tfrac{1}{4} + 2 \ii s} \) . \label{eq:u}
\end{align}
For any $s \in \Cb$, define processes $N^\pm(s)=(N_t^\pm(s))_{0 \leq t \leq T}$ and $Q^\pm(s)=(Q_t^\pm(s))_{0 \leq t \leq T}$ as follows
\begin{align}
N_t^\pm(s)
	&:= \ee^{ -\ii u^\pm(s) X_t + \ii s \<X\>_t} , &
Q_t^\pm(s)
	&:= \Eb_t \ee^{ \ii u^\pm(s) X_T} . \label{eq:NQ}
\end{align}
Then, for any $t \leq T$, we have
\begin{align}
\Eb_t \ee^{ \ii s \<X\>_T} 
	&=	N_t^\pm(s) Q_t^\pm(s) . \label{eq:E1=E2}
\end{align}
\end{proposition}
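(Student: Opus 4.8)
The plan is to \emph{condition on the entire path of the volatility process} $\sig$, which reduces the whole statement to the characteristic function of a Gaussian random variable. The role of $u^\pm(s)$ is purely algebraic: these are exactly the two roots that convert the characteristic exponent of the $\log$ price into the Fourier variable $s$ attached to the quadratic variation. I would record this fact first. Writing $u=u^\pm(s)$, a direct computation shows that $u^\pm(s)$ are the two roots of $u^2+\ii u+2\ii s=0$, so that
\begin{align}
-\tfrac{1}{2}\left(u^2+\ii u\right) &= \ii s .
\end{align}
Indeed, writing $u=\ii\left(-\tfrac{1}{2}\pm w\right)$ with $w=\sqrt{\tfrac{1}{4}+2\ii s}$, one computes $u^2+\ii u=-\left(w^2-\tfrac{1}{4}\right)=-2\ii s$. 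Since both signs satisfy this identity, the argument below will be insensitive to the choice of sign, so the two representations $N^+Q^+$ and $N^-Q^-$ are simultaneously valid.

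Next I would set up the conditioning. Let $\Gc$ be the $\sig$-algebra generated by the whole volatility path $(\sig_r)_{0\le r\le T}$ and work conditionally on $\Fc_t\vee\Gc$. From \eqref{eq:dX} and \eqref{eq:qv} we may write, for $t\le T$,
\begin{align}
X_T-X_t &= -\tfrac{1}{2}\left(\<X\>_T-\<X\>_t\right)+\int_t^T \sig_r\,\dd W_r .
\end{align}
The hypothesis $\rho=0$ enters at exactly one point: since then $W=W^1$ and $\sig\ind W$, conditioning on $\Gc$ leaves the law of the increments of $W$ undisturbed, so given $\Fc_t\vee\Gc$ the integrand $\sig_r$ is deterministic and $\int_t^T\sig_r\,\dd W_r$ is centered Gaussian with variance $\int_t^T\sig_r^2\,\dd r=\<X\>_T-\<X\>_t$, independent of $\Fc_t$. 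Applying the Gaussian characteristic function and then the identity above,
\begin{align}
\Eb\left[\ee^{\ii u X_T}\mid \Fc_t\vee\Gc\right]
 &= \ee^{\ii u X_t}\exp\left(-\tfrac{1}{2}\left(u^2+\ii u\right)\left(\<X\>_T-\<X\>_t\right)\right) \\
 &= \ee^{\ii u X_t}\,\ee^{\ii s\left(\<X\>_T-\<X\>_t\right)} .
\end{align}

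Finally I would take $\Eb_t=\Eb[\,\cdot\mid\Fc_t]$ of both sides, using $\Fc_t\subseteq\Fc_t\vee\Gc$ (tower property) together with the $\Fc_t$-measurability of $X_t$ and $\<X\>_t$ to pull out the corresponding factors:
\begin{align}
Q_t^\pm(s)=\Eb_t\,\ee^{\ii u X_T}
 &= \ee^{\ii u X_t-\ii s\<X\>_t}\,\Eb_t\,\ee^{\ii s\<X\>_T} .
\end{align}
Rearranging and recognizing $N_t^\pm(s)=\ee^{-\ii u X_t+\ii s\<X\>_t}$ yields $\Eb_t\ee^{\ii s\<X\>_T}=N_t^\pm(s)Q_t^\pm(s)$, as claimed. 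The main obstacle is not the algebra but the rigorous justification of the conditioning step: one must verify that $\int_t^T\sig_r\,\dd W_r$ really is conditionally Gaussian, and this is precisely where the independence $\sig\ind W$ (equivalently $\rho=0$) is indispensable, for otherwise the volatility path carries information about $W$ and the conditional law ceases to be Gaussian. A secondary technical point is integrability: since $u^\pm(s)$ is complex one has $\ee^{\ii u X_T}=S_T^{\,1/2\mp w}$, a complex power of $S_T$, so the conditional expectations above must be shown to be finite, which I would guarantee by imposing a standing moment condition on $S_T$ (equivalently, on the exponential moments of $\<X\>_T$). An alternative route would be to verify directly that $N^\pm(s)Q^\pm(s)$ is a $\Pb$-martingale and to compare terminal values, but this requires the dynamics of the conditional expectation $Q^\pm(s)$ and is less transparent than the conditioning argument above.
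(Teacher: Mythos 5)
Your proposal is correct, and it follows essentially the same route as the proof the paper points to (Carr and Lee's Proposition 5.1): condition on the volatility path, use $\rho=0$ to make $X_T$ conditionally Gaussian, apply the Gaussian characteristic function together with the root identity $u^2+\ii u=-2\ii s$, and finish with the tower property. Your added remarks on integrability of $S_T^{1/2\mp w}$ for complex $s$ and on the precise justification of the conditioning are appropriate refinements of that same argument, not a different approach.
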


\begin{proof}
See \cite[Proposition 5.1]{rrvd}. 
\end{proof}


Observe that the left-hand side of \eqref{eq:E1=E2} is the time $t$ price of a path-dependent claim with payoff $\ee^{ \ii s \<X\>_T}$ and the right-hand side is the product of an $\Fc_t$-measurable quantity $N_t^\pm(s)$ 
and the time-$t$ value of a European claim $Q_t^\pm(s)$.
As the time $t$-value of the European claim $Q_t^\pm(s)$ can be deduced from call and put prices, equation \eqref{eq:E1=E2} can be viewed as a pricing formula for exponential claims on realized quadratic variation $\<X\>_T$.
We now turn our attention to the replication of such claims.

\begin{proposition}[Replication of Exponential Claims]
\label{thm:replication}
Assume $\rho = 0$.
Let the function $u^\pm$ be as given in \eqref{eq:u} and let the processes $N^\pm(s)$ and $Q^\pm(s)$ be as given in \eqref{eq:NQ}. 
For any $s \in \Cb$, define a self-financing portfolio whose value $\Pi^\pm(s) = (\Pi_t^\pm(s))_{0 \leq t \leq T}$ is given by
\begin{align}
\dd \Pi_t^\pm(s)
	&=	 N_t^\pm(s)\dd Q_t^\pm(s) 
			+ \( \frac{ -\ii u^\pm(s) N_t^\pm(s) Q_{t-}^\pm(s)}{S_t} \) \dd S_t 
			+ \Big( \ii u^\pm(s) N_t^\pm(s) Q_{t-}^\pm(s) \Big) \dd B_t , \label{eq:dPi} \\
\Pi_0^\pm(s)
	&=  N_0^\pm(s) Q_0^\pm(s) . \label{eq:Pi0}
\end{align}
Then $\Pi^\pm(s)$ satisfies
\begin{align}
\Pi_T^\pm(s)
	&=	\ee^{ \ii s \<X\>_T } . \label{eq:Pi.T}
\end{align}
\end{proposition}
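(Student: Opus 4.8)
The plan is to show that the self-financing portfolio value $\Pi_t^\pm(s)$ coincides with the price process $V_t := N_t^\pm(s)\,Q_t^\pm(s)$ for every $t\le T$. Granting this, \eqref{eq:Pi.T} is immediate: at $t=T$ the conditional expectation in \eqref{eq:NQ} is trivial, so $Q_T^\pm(s) = \ee^{\ii u^\pm(s) X_T}$ and hence $N_T^\pm(s)\,Q_T^\pm(s) = \ee^{\ii s\<X\>_T}$. By Proposition \ref{thm:pricing}, $V_t = \Eb_t\ee^{\ii s\<X\>_T}$, which is a $\Pb$-martingale and satisfies $V_0 = \Pi_0^\pm(s)$ by \eqref{eq:Pi0}. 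Since $\Pi^\pm(s)$ and $V$ are semimartingales agreeing at $t=0$, it suffices to show they have identical dynamics, i.e. that the right side of \eqref{eq:dPi} equals $\dd V_t = \dd\(N_t^\pm(s)\,Q_t^\pm(s)\)$. I suppress $\pm$ and $(s)$ hereafter and write $u, N_t, Q_t$.

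First I would record the identity obtained by squaring \eqref{eq:u}, namely $u^2 + \ii u + 2\ii s = 0$; this is the engine of the cancellation. Since $B_t\equiv 1$ gives $\dd B_t = 0$ and $S_t^{-1}\dd S_t = \sig_t\,\dd W_t$, the bond term in \eqref{eq:dPi} drops out of the dynamics (it is present only so that the holdings of $N_t$ units of $Q$, $-\ii u N_t Q_t/S_t$ shares of $S$, and $\ii u N_t Q_t$ bonds have total value exactly $N_t Q_t$), and \eqref{eq:dPi} becomes $\dd\Pi_t = N_t\,\dd Q_t - \ii u N_t Q_t\sig_t\,\dd W_t$. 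The Itô product rule gives $\dd(N_t Q_t) = N_t\,\dd Q_t + Q_t\,\dd N_t + \dd\<N,Q\>_t$, where, $N$ being continuous, the covariation sees only the continuous $W$-driven part of $Q$. Thus everything reduces to the identity $Q_t\,\dd N_t + \dd\<N,Q\>_t = -\ii u N_t Q_t\sig_t\,\dd W_t$. Applying Itô to $N_t = \ee^{-\ii u X_t + \ii s\<X\>_t}$ via \eqref{eq:dX}--\eqref{eq:qv} yields $\dd N_t = \(\ii u + 2\ii s\) N_t\sig_t^2\,\dd t - \ii u N_t\sig_t\,\dd W_t$.

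The step I expect to be the main obstacle is pinning down the $W$-diffusion part of $Q_t$, which \eqref{eq:NQ} specifies only implicitly through a conditional expectation. This is exactly where the hypothesis $\rho = 0$ from \eqref{eq:W-decomp} enters: the payoff $\ee^{\ii s\<X\>_T}$ is a functional of $\<X\>_T = \int_0^T\sig_r^2\,\dd r$, hence of the volatility path alone, which is independent of $W = W^1$. Consequently the martingale $V$ carries no $W$-risk, $\<V, W\>_t \equiv 0$. Writing $Q_t = N_t^{-1} V_t$ with $N_t^{-1} = \ee^{\ii u X_t - \ii s\<X\>_t}$, the same Itô computation together with $u^2 + \ii u + 2\ii s = 0$ shows $N^{-1}$ is a driftless exponential martingale, $\dd\(N_t^{-1}\) = \ii u N_t^{-1}\sig_t\,\dd W_t$; since $V$ is orthogonal to $W$, the $W$-part of $\dd Q_t$ is $\ii u Q_t\sig_t\,\dd W_t$, so that $\dd\<N,Q\>_t = u^2 N_t Q_t\sig_t^2\,\dd t$. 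Substituting, the drift of $Q_t\,\dd N_t + \dd\<N,Q\>_t$ is proportional to $\ii u + 2\ii s + u^2 = 0$ and vanishes, leaving exactly $-\ii u N_t Q_t\sig_t\,\dd W_t$. This gives $\dd\Pi_t = \dd V_t$; with $\Pi_0 = V_0$ we conclude $\Pi_t = V_t$ for all $t$, and in particular \eqref{eq:Pi.T}.
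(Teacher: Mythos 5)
Your proof is correct, and it is essentially the argument behind \cite[Proposition 5.3]{rrvd}, which the paper simply cites rather than reproducing: the quadratic identity $u^2+\ii u+2\ii s=0$ satisfied by $u^\pm(s)$, the Itô computation showing $N^\pm(s)$ and its reciprocal are (drift-corrected) exponentials, and the use of $\rho=0$ to conclude $\<V,W\>=0$ and thereby identify the $\dd W$-coefficient of $Q^\pm(s)$ are exactly the ingredients of the cited proof. The only step you pass over lightly is the justification that $V_t=\Eb_t\ee^{\ii s\<X\>_T}$ is orthogonal to $W$ (it requires the conditional expectation given $\Fc_t$ of a volatility-path functional to reduce, by independence, to a martingale in the volatility filtration), but this is at the same level of rigor as the source.
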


\begin{proof}
See \cite[Proposition 5.3]{rrvd}.
\end{proof}

We see from \eqref{eq:dPi} that the portfolio value $\Pi^\pm(s)$ is self-financing and involves trading in three assets: a European claim $Q^\pm(s)$, the underlying risky asset $S$, and a zero-coupon bond $B$.  Furthermore, from \eqref{eq:Pi.T}, we see that the portfolio value $\Pi^\pm$ replicates the exponential claim on realized quadratic variation $\ee^{ \ii s \<X\>_T }$.  Note that
\begin{align}
\Eb_t \ee^{\ii s \<X\>_T}
	&=	N_t^\pm(s) Q_t^\pm(s) 
	 = \Pi_t^\pm(s) , &
	&\text{(when $\rho = 0$).}  \label{eq:pi.note}
\end{align}
That is, the value of the replicating portfolio value equals the value of the claim.

Carr and Lee use Propositions \ref{thm:pricing} and \ref{thm:replication} to price and replicate more general claims with payoffs of the form $\varphi(\<X\>_T)$ where $\varphi$ can be expressed as a sum or integral of exponentials.  Examples of such functions include positive fractional powers $\varphi(\<X\>_T) = \<X\>_T^p$ ($0 < p < 1$), negative powers $\varphi(\<X\>_T) = \<X\>_T^{-r}$ ($r>0$), and puts $\varphi(\<X\>_T) = (\<X\>_T-K)^+$.  
For the purposes of this paper, we will consider only payoffs that can be expressed as a finite linear combination of exponentials
\begin{align}
\varphi(\<X\>_T)
	&=	\sum_k a_k \ee^{\ii s_k \<X\>_T } , &
a_k, s_k
	&\in \Cb , &
s_k
	&\neq \ii/8 .  \label{eq:linear-combo} 
\end{align}
The reason for the restriction $s_k \neq \ii/8$ will become clear below.
The following proposition states how a claim with a payoff of the form \eqref{eq:linear-combo}
can be priced and replicated.

\begin{proposition}[Pricing and Replication of general claims]
\label{thm:general}
Assume $\rho = 0$ and consider a claim with a payoff $\varphi(\<X\>_T)$ of the form \eqref{eq:linear-combo}.
The price at time $t$ of such a claim is
\begin{align}
\Eb_t \varphi(\<X\>_T)
	&=	\sum_k a_k N_t^\pm(s_k) Q_t^\pm(s_k) , 
\end{align}
where the processes $N^\pm(s)$ and $Q^\pm(s)$ are defined in \eqref{eq:NQ}.  Moreover, define the value of a self-financing portfolio as follows
\begin{align}
\dd \Pi_t^\pm
	&=	\sum_k a_k \dd \Pi_t^\pm(s_k), &
\Pi_0^\pm
	&=	\sum_k a_k \Pi_0^\pm(s_k) , \label{eq:Pi.basic}
\end{align}
where the differential $\dd \Pi_t^\pm(s)$ and the initial value $\Pi_0^\pm(s)$ are given by \eqref{eq:dPi} and \eqref{eq:Pi0}, respectively.  Then we have
\begin{align}
\Pi_T^\pm
	&=	\varphi(\<X\>_T) .
\end{align}
\end{proposition}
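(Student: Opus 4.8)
The plan is to derive both the pricing formula and the replication statement purely from the linearity of the finite-sum payoff \eqref{eq:linear-combo}, reducing everything to the single-exponential results in Propositions \ref{thm:pricing} and \ref{thm:replication}. Because the sum over $k$ is finite, every linear operation below --- conditional expectation, stochastic integration, and formation of the portfolio --- commutes with the summation, so no convergence or interchange-of-limit issues arise.

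First I would treat the pricing formula. Applying $\Eb_t$ to $\varphi(\<X\>_T) = \sum_k a_k \ee^{\ii s_k \<X\>_T}$ and using linearity of conditional expectation, then Proposition \ref{thm:pricing} term by term, gives
\begin{align}
\Eb_t \varphi(\<X\>_T)
	&=	\sum_k a_k \, \Eb_t \ee^{\ii s_k \<X\>_T}
	 =	\sum_k a_k N_t^\pm(s_k) Q_t^\pm(s_k) ,
\end{align}
which is exactly the asserted price. Next, for replication, I would integrate the defining dynamics \eqref{eq:Pi.basic}: since the stochastic integral is linear, integrating $\dd \Pi_t^\pm = \sum_k a_k \dd \Pi_t^\pm(s_k)$ and adding the initial condition $\Pi_0^\pm = \sum_k a_k \Pi_0^\pm(s_k)$ yields $\Pi_t^\pm = \sum_k a_k \Pi_t^\pm(s_k)$ for all $t \leq T$. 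Evaluating at $t = T$ and invoking \eqref{eq:Pi.T} then gives
\begin{align}
\Pi_T^\pm
	&=	\sum_k a_k \Pi_T^\pm(s_k)
	 =	\sum_k a_k \ee^{\ii s_k \<X\>_T}
	 =	\varphi(\<X\>_T) ,
\end{align}
as required.

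The one point deserving care --- and the step I expect to be the main, if routine, obstacle --- is verifying that $\Pi^\pm$, constructed as a finite linear combination of the self-financing portfolios $\Pi^\pm(s_k)$, is itself self-financing. I would do this by reading off from \eqref{eq:dPi} the holdings of each $\Pi^\pm(s_k)$ in the three traded assets (the European claim $Q^\pm(s_k)$, the stock $S$, and the bond $B$) and observing that the holdings of $\Pi^\pm$ are just the $a_k$-weighted sums of these. Since the self-financing condition is linear in both the holdings and the underlying asset-price differentials, and since only finitely many terms are involved, $\Pi^\pm$ inherits the self-financing property, completing the argument.
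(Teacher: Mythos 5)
Your proposal is correct and follows essentially the same route as the paper, whose proof is simply to invoke the linearity of the finite sum \eqref{eq:linear-combo} and apply Propositions \ref{thm:pricing} and \ref{thm:replication} term by term. Your additional check that the finite linear combination of self-financing portfolios is itself self-financing is a sensible (if routine) elaboration of what the paper leaves implicit.
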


\begin{proof}
Use the fact that $\varphi$ is a linear combination of exponentials and apply Propositions \ref{thm:pricing} and \ref{thm:replication}.
\end{proof}

We shall refer to $\Pi^\pm$ as the \textit{basic} replicating portfolio value in order to distinguish it from the correlation immunized portfolio value that we discuss in the next section.


\subsection{Approximate pricing and replication under nonzero correlation}
\label{sec:corr}

In general, when $\rho \neq 0$, exact pricing and replication of volatility claims relative to European claims is not possible (the variance swap being a notable exception).  The goal of this section is to review the \textit{correlation immunization} strategy proposed in \cite{rrvd} in order to approximately price and replicate volatility claims when $\rho \neq 0$.  

A key observation of Carr and Lee is that, when $\rho = 0$, we have from \eqref{eq:pi.note} that
\begin{align}
\Eb_t \ee^{\ii s \<X\>_T}
	&=	\alpha^+ \Pi_t^+(s) + \alpha^- \Pi_t^-(s) , &
1
	&=	\alpha^+ + \alpha^- , &
	&\text{(when $\rho = 0$).} \label{eq:pi.combo}
\end{align}
Thus, when $\rho = 0$, not only does $\Pi^\pm(s)$ replicate the exponential claim, but so do linear combinations of the form \eqref{eq:pi.combo}.
Note, as $\rho$ moves away from zero, the value of the exponential claim $\Eb_t \ee^{\ii s \<X\>_T}$ does \textit{not} change because $\rho$ does not appear in the dynamics of $\<X\>$.
However, the value of the portfolio $\Pi_t^\pm(s) = N_t^\pm(s) Q_t^\pm(s)$ \textit{does} change because $\rho$ appears in the dynamics of $X$ and $Q_t^\pm(s) = \Eb_t \ee^{\ii u^\pm(s) X_T}$.
In general, there is an $\Oc(\rho)$ difference between the the true price of the claim and the value of the replicating portfolio 
\begin{align}
\Eb_t \ee^{\ii s \<X\>_T} - 
	&=	\alpha^+ \Pi_t^+(s) + \alpha^- \Pi_t^-(s) + \Oc(\rho), &
1
	&=	\alpha^+ + \alpha^- . \label{eq:asymptotic-2}
\end{align}
What Carr and Lee show, is that by defining $\alpha^\pm(s)$ as the unique solution to
\begin{align}
1
	&=	\alpha^+(s) + \alpha^-(s) , &
0
	&=	\alpha^+(s) u^+(s) + \alpha^-(s) u^-(s) , &
s
	&\neq \ii / 8 , \label{eq:alpha.s}
\end{align}
we have formally that
\begin{align}
\Eb_t \ee^{\ii s \<X\>_T}
	&=	\alpha^+(s) \Pi_t^+(s) + \alpha^-(s) \Pi_t^-(s) + \Oc(\rho^2) . \label{eq:asymptotic}
\end{align}
Note that we have excluded $s = \ii/8$ because, in this case, \eqref{eq:alpha.s} has no solution.
This is the reason for the restriction $s_k \neq \ii / 8$ in \eqref{eq:linear-combo}.
By choosing $\alpha^\pm = \alpha^\pm(s)$ we can reduce the pricing and replication error associated with strategies of the form $\alpha^+ \Pi^+(s) + \alpha^- \Pi^-(s)$ from $\Oc(\rho)$ to $\Oc(\rho^2)$.  This motivates the following definition.

\begin{definition}
\label{def:immune}
Consider a claim with a payoff $\varphi(\<X\>_T)$ of the form \eqref{eq:linear-combo}.
For such a claim, we define the \textit{correlation immunized hedging portfolio value}
$\Pi = (\Pi_t)_{0 \leq t \leq T}$ as follows
\begin{align}
\dd \Pi_t
	&=	\sum_k a_k \Big( \alpha^+(s_k) \dd \Pi_t^+(s_k) + \alpha^-(s_k) \dd \Pi_t^-(s_k) \Big), \label{eq:Pi.immunized} \\
\Pi_0
	&=	\sum_k a_k  \Big( \alpha^+(s_k) \Pi_0^+(s_k) + \alpha^-(s_k) \Pi_0^-(s_k) \Big) , 
\end{align}
where the differential $\dd \Pi_t^\pm(s)$ and the initial value $\Pi_0^\pm(s)$ are given by \eqref{eq:dPi} and \eqref{eq:Pi0}, respectively. 
\end{definition}

Denoting by $V_t := \Eb_t \varphi(\<X\>_T)$ the true price of the volatility derivative, it follow from \eqref{eq:asymptotic-2} that the basic hedging portfolio value $\Pi^\pm$ satisfies
\begin{align}
V_t 
	&=	\Pi_t^\pm + \Oc(\rho) , \label{eq:asymptotic-3}
\end{align}
whereas from \eqref{eq:asymptotic} the correlation immunized hedging portfolio value $\Pi$ satisfies
\begin{align}
V_t 
	&=	\Pi_t + \Oc(\rho^2) . \label{eq:asymptotic-4}
\end{align}
Equations \eqref{eq:asymptotic-3} and \eqref{eq:asymptotic-4} tell us how quickly the hedging portfolio values $\Pi^\pm$ and $\Pi$ deviate from the true value of the volatility derivative $V$ as $\rho$ moves away from zero.  What these equations do \textit{not} tell us, however, is how large the hedging errors can be for a \textit{fixed} $\rho$.  Obviously, from a practical point of view, one would like to know how large hedging errors can be for a fixed $\rho$.  We will investigate this issue in Section \ref{sec:monte-carlo} by performing a number of Monte Carlo simulations.  Before doing this, however, we discuss one other advantage of the correlation immunized pricing and hedging strategy that, to our knowledge, has not been discussed in literature.

\begin{proposition}
\label{thm:real}
Consider a claim with payoff \eqref{eq:linear-combo} where, for all $k$ we have $a_k \in \Rb$ and $s_k = - \ii \lam_k$ with $\lam_k \in \Rb$.  Then we have $\varphi(\<X\>_T) \in \Rb$ and, furthermore, the associated correlation immunized portfolio value $\Pi$, given by \eqref{eq:Pi.immunized}, satisfies $\Pi_t \in \Rb$ for all $t \in [0,T]$.
\end{proposition}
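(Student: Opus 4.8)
The first assertion is immediate: since $s_k = -\ii\lam_k$ with $\lam_k\in\Rb$, we have $\ii s_k\<X\>_T = \lam_k\<X\>_T\in\Rb$, so each $\ee^{\ii s_k\<X\>_T} = \ee^{\lam_k\<X\>_T}$ is real (indeed positive), and $\varphi(\<X\>_T) = \sum_k a_k\ee^{\lam_k\<X\>_T}\in\Rb$ because the $a_k$ are real. The substance of the proposition is the second assertion, and my plan is to prove it by showing that the complex conjugate of $\Pi_t$ satisfies the same dynamics and initial condition as $\Pi_t$, whence $\overline{\Pi_t} = \Pi_t$.

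The algebraic heart of the argument is to track how complex conjugation acts on the root pair $\{u^+(s_k), u^-(s_k)\}$. Writing $s = -\ii\lam$ with $\lam\in\Rb$, the discriminant becomes $\tfrac14 + 2\ii s = \tfrac14 + 2\lam\in\Rb$, so $u^\pm(s) = \ii\bigl(-\tfrac12 \pm \sqrt{\tfrac14+2\lam}\,\bigr)$. I would introduce the involution $\iota(u) := -\overline{u}$, which satisfies $\overline{\ii u} = \ii\,\iota(u)$, and show that $\iota$ permutes $\{u^+(s), u^-(s)\}$: when $\tfrac14+2\lam\ge 0$ the two roots are purely imaginary and $\iota$ fixes each of them, whereas when $\tfrac14+2\lam<0$ the square root is purely imaginary and $\iota$ interchanges $u^+(s)$ and $u^-(s)$. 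Let $\pi$ denote the resulting permutation of the index set $\{+,-\}$ (the identity in the first regime, the transposition in the second), characterized by $\iota(u^\pm) = u^{\pi(\pm)}$. From $\overline{N_t^\pm(s)} = \ee^{-\ii\,\iota(u^\pm)X_t + \lam\<X\>_t} = N_t^{\pi(\pm)}(s)$ and $\overline{Q_t^\pm(s)} = \Eb_t\ee^{\ii\,\iota(u^\pm)X_T} = Q_t^{\pi(\pm)}(s)$ (using that $X_t$ and $\<X\>_t$ are real), and from the symmetry of the defining system \eqref{eq:alpha.s} together with uniqueness of its solution, I would deduce $\overline{\alpha^\pm(s)} = \alpha^{\pi(\pm)}(s)$.

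With these symmetries in hand, I would pass conjugation through the portfolio dynamics. Since $B_t\equiv 1$ and $S_t$, $X_t$, $\<X\>_t$ are real-valued, the integrator $\dd S_t$ and the real quantities appearing in \eqref{eq:dPi} are invariant under conjugation, and conjugation commutes with the complex stochastic integrals defining $\Pi_t^\pm(s)$ through \eqref{eq:dPi}–\eqref{eq:Pi0} (expand into real and imaginary parts to verify $\overline{\int N\,\dd Q} = \int \overline{N}\,\dd\overline{Q}$). Combining this with $\overline{\ii u^\pm} = \ii\,\iota(u^\pm)$, $\overline{N_t^\pm} = N_t^{\pi(\pm)}$ and $\overline{Q_t^\pm} = Q_t^{\pi(\pm)}$ shows that $\overline{\Pi_t^\pm(s)}$ solves exactly the equation defining $\Pi_t^{\pi(\pm)}(s)$ with the matching initial value, so $\overline{\Pi_t^\pm(s)} = \Pi_t^{\pi(\pm)}(s)$. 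It is important that I do \emph{not} invoke the identity $\Pi_t^\pm = N_t^\pm Q_t^\pm$, which holds only when $\rho=0$; the conclusion must be read off the self-financing dynamics directly so that it remains valid for arbitrary $\rho$. Finally, for each $k$ the symmetry gives $\overline{\alpha^+(s_k)\Pi_t^+(s_k) + \alpha^-(s_k)\Pi_t^-(s_k)} = \alpha^{\pi(+)}(s_k)\Pi_t^{\pi(+)}(s_k) + \alpha^{\pi(-)}(s_k)\Pi_t^{\pi(-)}(s_k)$, which equals the original expression because $\pi$ merely reorders the two summands; since the $a_k$ are real it follows that $\overline{\dd\Pi_t} = \dd\Pi_t$ and $\overline{\Pi_0} = \Pi_0$, hence $\Pi_t\in\Rb$ for all $t$.

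The main obstacle I anticipate is precisely the branch-dependence of the square root: the way conjugation acts on the pair $\{u^+, u^-\}$ genuinely changes across the threshold $\lam=-\tfrac18$, so the proof needs either a case split or the unifying involution $\iota$ above, rather than a single uniform identity such as $\overline{u^\pm} = u^\mp$. A secondary technical point requiring care is the degenerate value $\lam=-\tfrac18$, where $u^+=u^-$ and the weights $\alpha^\pm$ are undefined; this is excluded by the standing assumption that \eqref{eq:alpha.s} admits a unique solution, but it should be flagged explicitly.
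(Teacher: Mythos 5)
Your proposal is correct and rests on the same mechanism as the paper's proof: a case split at $\lam=-1/8$, with everything real in the first regime and the $+$ and $-$ quantities (including the weights $\alpha^\pm$) forming complex-conjugate pairs in the second, so that the immunized combination is real. The paper verifies this by explicitly expanding $\Im\Pi_0$ and $\Im\dd\Pi_t$ into trigonometric form, whereas you package the identical computation more cleanly via the involution $\iota$ and the conjugation symmetry $\overline{\Pi_t^\pm(s)}=\Pi_t^{\pi(\pm)}(s)$, but the substance is the same.
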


\begin{proof}
If $u^\pm$ are real, then so are $\alpha^\pm$.
If $u^\pm$ are not real, then they are complex conjugates.
But in that case, $N^\pm$, $Q^\pm$, and $\alpha^\pm $are all complex conjugates.
To see the latter fact, note that if ($\alpha^+$, $\alpha^-$) satisfy \eqref{eq:alpha.s},
then so do $\overline{\alpha^-}$, $\overline{\alpha^+}$.  Therefore,
uniqueness of the  \eqref{eq:alpha.s} solution implies $\alpha^-$ = $\overline{\alpha^+}$.
Hence, $\Pi$ is real.
\end{proof}

Note that, when $\rho \neq 0$, even when the derivative payoff is real $\varphi(\<X\>_T) \in \Rb$, the basic replicating portfolio value $\Pi^\pm$, defined by \eqref{eq:Pi.basic}, will in general be $\Cb$-valued.  Thus, in addition to reducing the pricing and hedging error from $\Oc(\rho)$ to $\Oc(\rho^2)$, another advantage of the correlation immunization strategy is that it ensures that the hedging portfolio value $\Pi$ is $\Rb$-valued when $\varphi(\<X\>_T) \in \Rb$.

%
%

\section{Monte Carlo experiments}
\label{sec:monte-carlo}
In order to test the correlation immunization pricing and hedging strategy described in Section \ref{sec:corr}, we assume that the volatility process $\sig$ has risk-neutral dynamics as described by \cite{heston1993}.  Specifically, the volatility process $\sig$ is modeled by a stochastic differential equation (SDE) of the form
\begin{align}
\sig_t
	&=	\sqrt{Y_t} , &
\dd Y_t
	&=	\kappa (\theta - Y_t)\dd t + \delta \sqrt{Y_t} \dd W^2_t , \label{eq:heston}
\end{align}
with $(X_0, Y_0) \in \Rb \times (0,\infty)$, $\kappa, \theta, \delta > 0$.
The advantage of assuming that $\sig$ has Heston dynamics is that the value of an European exponential claim $Q_t^\pm(s)$ defined in \eqref{eq:NQ} (i.e., the characteristic function of $X$) can be computed explicitly. 
We have
\begin{align}
Q_t^\pm(s) 
	:= \Eb_t \ee^{ \ii u^\pm(s) X_T}
	&= \ee^{\ii u^\pm(s) X_t + C(T-t, u^\pm(s)) + Y_t D(T-t, u^\pm(s))} , \label{eq:Q}
\end{align}
where the functions $C$ and $D$ are given by
\begin{align}
C(\tau, u)
    &:= \frac{\kappa \theta}{\delta^2}\((\kappa-\ii \rho \delta u + d(u))\tau - 2\log\[\frac{1-\gamma(u)\ee^{d(u) \tau}}{1-\gamma(u)}\]\) , \label{eq:C} \\
D(\tau, u)
    &:= \frac{\kappa - \ii \rho u \delta + d(u)}{\delta^2} \frac{1-\ee^{d(u) \tau}}{1- \gamma(u) \ee^{d(u) \tau}} , \label{eq:D} \\
\gamma(u)
    &:= \frac{\kappa - \ii \rho u \delta + d(u)}{\kappa - \ii \rho u \delta - d(u)} ,\\
d(u)
    &:= \sqrt{\delta^2 \(u^2 + \ii u\) + (\kappa - \ii \rho u \delta)^2} .
\end{align}
The characteristic function of $\<X\>_T$ can also be computed explicitly and is given by
\begin{align}
\Eb_t \ee^{\ii s \<X\>_T}
	&=	\ee^{\ii s \< X\>_t + A(T-t,s) + Y_t B(T-t,s)} , \label{eq:qv.char}
\end{align}
where the functions $A$ and $B$ are defined as follows
\begin{align}
A(\tau,s)
    &:= \frac{2\kappa \theta}{\delta^2}\log{\frac{2\xi \ee^{\frac{1}{2}\tau (\xi + \kappa)}}{\xi - \kappa + \ee^{\tau \xi (\xi+\kappa)}}} , &
B(\tau,s)
    &:= \frac{2\ii s (\ee^{\tau \xi} - 1)}{\xi - \kappa + \ee^{\tau \xi (\xi+\kappa)}}, &
\xi
    &:= \sqrt{\kappa^2 - 2\delta^2 \ii s}.
\end{align}
Using \eqref{eq:qv.char}, the value $V_t := \Eb_t \varphi( \<X\>_T)$ of a volatility derivative of the form \eqref{eq:linear-combo}
is given by
\begin{align}
V_t
	:=	\Eb_t \varphi( \<X\>_T )
	&=	\sum_k a_k \ee^{\ii s_k \< X\>_t + A(\tau,s_k) + Y_t B(\tau,s_k)} .
\end{align}
In all of the Monte Carlo simulations we perform, the following parameters remain fixed
\begin{align}
X_0
	&=	0 , &
Y_0
	&=	0.04 , &
\kappa
	&=	1.15 , &
\theta
	&=	0.04 , &
\delta
	&=	0.2 , &
T
	&=	1 .
\end{align}
We use a standard Euler-Maruyama discretization scheme with time step $\Delta t = 1/1,000$ and we generate {$N = 10,000$} sample paths.
Specifically, the $i^\text{th}$ sample path $(X^i,Y^i)$ is approximated using
\begin{align}
X_{t + \Delta t}^i 
	&=	X_t^i - \tfrac{1}{2} Y_t^i \Delta t + \sqrt{Y_t^i} \Big( \rhob ( W_{t + \Delta t}^{1,i} - W_t^{1,i} ) + \rho ( W_{t + \Delta t}^{2,i} - W_t^{2,i} ) \Big), \\
Y_{t + \Delta t}^i
	&=  Y_t^i + \kappa (\theta - Y_t^i) \Delta t + \del \sqrt{Y_t^i} ( W_{t + \Delta t}^{2,i} - W_t^{2,i} )  ,
\end{align}
where the increments $W_{t + \Delta t}^{j,i} - W_t^{j,i}$ are independent $\Nc(0,\Delta t)$ random variables.
Note that, while $Y^i$ can in theory become negative, we never encountered this in our simulations.
The $i^\text{th}$ sample path of quadratic variation $\<X^i\>$ is generated using
\begin{align}
\<X^i\>_{t+\Delta t}
	&=	\< X^i \>_t + ( X_{t + \Delta t}^i -  X_t^i )^2. \label{eq:qv.approx}
\end{align}
Finally, the $i^\text{th}$ sample path of the basic replicating portfolio value $\Pi^{\pm,i}$ is approximated using
\begin{align}
\Pi_{t+\Delta t}^{\pm,i} 
	&= \Pi_t^{\pm,i} + \sum_k a_k N_t^{\pm,i}(s_k) \Big( Q_{t+\Delta t}^{\pm,i}(s_k) - Q_t^{\pm,i}(s_k) \Big) \\ &\quad
			+ \sum_k a_k \( \frac{ -\ii u^\pm(s_k) N_t^{\pm,i}(s) Q_{t-}^{\pm,i}(s_k)}{S_t} \) \Big( S_{t+\Delta t}^i - S_t^i \Big) . 
\end{align}

We will be interested in comparing how closely the basic hedging portfolio values $\Pi^\pm$ and the correlation immunized portfolio value $\Pi$ replicate the derivative payoff $\varphi(\<X\>_T)$.  
To this end, we denote the hedging errors associated with the $i^\text{th}$ sample path by
\begin{align}
\eps^{\pm,i}
    &= {\Pi^{\pm,i}_T - \varphi(\<X^i\>_T)} , &
\eps^i
    &= { \Pi^i_T - \varphi(\<X^i\>_T) } . \label{eq:error-i}
\end{align}
From this, we compute the sample means and sample standard deviations of the hedging errors
\begin{align}
\widehat{\eps}^\pm
    &:= \frac{1}{N} \sum_{i=1}^N \eps^{\pm,i} , &
\widehat{\eps} 
    &:= \frac{1}{N} \sum_{i=1}^N \eps^i , \label{eq:mean} \\
\widehat{\sigma}^\pm
    &:= \Big( \frac{1}{N-1} \sum_{i=1}^N (\eps^{\pm,i}-\widehat{\eps}^\pm)^2 \Big)^{1/2} , &
\widehat{\sigma}
    &:= \Big( \frac{1}{N-1} \sum_{i=1}^N (\eps^i-\widehat{\eps})^2 \Big)^{1/2} . \label{eq:stand-dev} 
\end{align}
Below, we describe the results of our Monte Carlo experiments.


\subsection{Increasing exponential of realized variance}
\label{sec:exp}
In this section, we consider a volatility derivative with a simple increasing exponential payoff
\begin{align}
\varphi(\<X\>_T)
	&=	\ee^{\<X\>_T} . \label{eq:phi.exp}
\end{align}
Note that, for the payoff \eqref{eq:phi.exp}, we have $\Re \Pi_0^\pm = \Pi_0^\pm$ for all $\rho \in [-1,1]$.
In Figure \ref{fig:rho-effect-exp-pos} we plot $\Pi_0^\pm$, $\Pi_0$ and $V_0$ as functions of $\rho$.
Recall that $\Pi_0^\pm$ is the initial value of the hedging portfolio as computed \text{without} correlation immunization,
$\Pi_0$ is the initial value of the hedging portfolio as computed \text{with} correlation immunization,
and $V_0$ is the \textit{true} value of the volatility derivative.
Note that all four methods give the same price when $\rho = 0$ (as they ought to).
However, as $\rho$ moves away from zero, the four methods diverge, and the correlation immunization price $\Pi_0$ provides the best approximation of the true price $V_0$ for all values of $\rho \in [-1,1]$.

In Figure \ref{fig:sample-path-exp-pos} we plot sample paths of $\Pi^\pm$, $\Pi$ and $V$ for $\rho = \{-0.99, -0.66, 0, 0.99\}$.
From the figure, we see that when $\rho = 0$ all three replication strategies $\Pi^+$, $\Pi^-$ and $\Pi$ closely track the true value $V$ of the derivative.
Note that, because the results of Section \ref{sec:ind} are \textit{exact}, any tracking error in the $\rho = 0$ case is \textit{entirely} due to the discretization error associated with the Euler-Maruyama scheme.  When $\rho > 0$ we observe that $\Pi^+ < V$ and $\Pi^- > V$ while when $\rho < 0$ we observe that $\Pi^+ > V$ and $\Pi^- < V$.  

In Figure \ref{fig:hist-exp-pos} we plot histograms of the hedging errors $\eps^\pm$ and $\eps$ for $\rho = \{-0.99, -0.66, 0, 0.66, 0.99\}$.
It is clear from the histograms that, for all values of $\rho$, the hedging error of the correlation immunization strategy $\eps$ is centered near zero, whereas the hedging errors without immunization $\eps^\pm$ are typically centered away from zero.  
Summary statistics of the Monte Carlo simulations are provided in Table \ref{tbl:stats-exp-pos}.  
The table confirms what we observe visually from the figures; for each of the values of $\rho$ we tested, the hedging strategy that best approximates $V$ is the correlation immunization strategy $\Pi$.
Overall, compared to the basic strategies $\Pi^\pm$, it is clear that the correlation immunization strategy $\Pi$ dramatically reduces pricing and hedging errors for all values of $\rho$.


We have noted above that $\Pi^+ < V$ and $\Pi^- > V$ when $\rho > 0$ and $\Pi^+ > V$ and $\Pi^- < V$ when $\rho < 0$.
It is natural to wonder if these inequalities hold for all payoffs of the form $\varphi(\<X\>_T) = \ee^{c\<X\>_T}$ where $c>0$.
The following theorem addresses this question.

\begin{theorem}
\label{thm:d-rho}
Fix $c>0$.  Suppose that the dynamics of $\sig$ are given by \eqref{eq:heston} and
\begin{align}
-\del^2 \kappa^2 + 2 c \del^4 
	&> 0 .
\end{align}
Then we have
\begin{align}
\d_\rho Q_t^+(-\ii c) 
	&< 0 , &
	&\text{and}&
\d_\rho Q_t^-(-\ii c)
	&> 0 , &
	&\forall \, \rho < \overline{r} := \frac{ - \kappa }{ \del \ii u^-(-\ii c) } > 0 , \label{derivatives}
\end{align}
and thus $\Pi_t^\pm(-\ii c) := N_t^\pm(-\ii c) Q_t^\pm(-\ii c)$ and $V_t := \Eb_t \ee^{c \<X\>_T}$ satisfy
\begin{align}
\begin{aligned}
\Pi_t^+(-\ii c)
	&> V_t , &
\rho
	&< 0 , \\
\Pi_t^+(-\ii c)
	&< V_t , &
\overline{r} > \rho
	&> 0	, \\
 \Pi_t^-(-\ii c)
	&> V_t , &
\overline{r} > \rho
	&> 0 , \\
\Pi_t^-(-\ii c)
	&< V_t , &
\rho
	&< 0	,
\end{aligned} \label{inequalities}
\end{align}
\end{theorem}

\begin{proof}
The inequalities listed in \eqref{inequalities} follow from \eqref{derivatives} because
\begin{align}
\d_\rho N_t^\pm(s)
	&= 0 , &
\d_\rho V_t
	&= 0 , &
N_t^\pm(- \ii c)
	&> 0 ,
\end{align}
and when $\rho=0$ we have $\Pi_t^\pm(s) =	V_t$.
Therefore, we need only to check that \eqref{derivatives} holds.  Noting that $Q_t^\pm(s)$ is given explicitly by \eqref{eq:Q}, we have
\begin{align}
\d_\rho Q_t^\pm(s)
	&=	Q_t^\pm(s) \cdot \big( \d_\rho C(\tau,u^\pm(s)) + Y_t \d_\rho D(\tau,u^\pm(s)) \big) , \label{eq:dQ}
\end{align}
where the functions $C$ and $D$ are given by \eqref{eq:C} and \eqref{eq:D}, respectively.
Noting that $Q_t^\pm(-\ii c)>0$, we see from \eqref{eq:dQ} that the sign of $\d_\rho Q_t^\pm(-\ii c)$ is determined by the sign of $\d_\rho C(\tau,u^\pm(-\ii c)) + Y_t \d_\rho D(\tau,u^\pm(-\ii c))$.  It is straightforward to show that
\begin{align}
\d_\rho C(\tau,u)
	&=	\frac{-\ii u \del }{d^2(u)}\xi^C(\tau,u)  D(\tau,u) , &
\d_\rho D(\tau,u)
	&=	\xi^D(\tau,u) \frac{ -\ii u \del^3 D^2(\tau,u)}{\alpha(u) d(u) (\ee^{d(u)\tau}-1)} , 
\end{align}
where we have defined
\begin{align}
\xi^C(\tau,u)
	&:=	\kappa \theta \( 2 - \frac{d(u) \tau (\ee^{d(u) \tau}+1)}{\ee^{d(u) \tau} - 1} \) , \\
\xi^D(\tau,u)
	&:= ( \beta(u) + d(u) ) (\ee^{2 d(u)\tau} - 1 ) - 2 d(u) ( \ee^{d(u)\tau} - 1 ) - 2 \beta(u) d(u) \tau \ee^{d(u) \tau} , \\
\alpha(u)
	&:=	\del^2 ( u^2 + \ii u ) , \\
\beta(u)
	&:= \kappa - \ii \rho u \del , \\
d(u)
	&:=	\sqrt{ \alpha(u) + \beta^2(u) } .
\end{align}
Using the above expressions, one can verify that
\begin{align}
\d_\rho C(\tau,u^+(-\ii c))
	&<0 , &
\d_\rho D(\tau,u^+(-\ii c))
	&<0 , &
	&\forall \, \rho < \overline{r} , \\
\d_\rho C(\tau,u^-(-\ii c))
	&>0 , &
\d_\rho D(\tau,u^-(-\ii c))
	&>0 , &
	&\forall \, \rho < \overline{r} .
\end{align}
Thus, noting that $Y_t > 0$ we have that \eqref{derivatives} holds.
\end{proof}


\subsection{Decreasing exponential of realized variance}
\label{sec:exp.2}
In this section, we consider a volatility derivative with a simple decreasing exponential payoff
\begin{align}
\varphi(\<X\>_T)
	&=	\ee^{-\<X\>_T} . \label{eq:phi.exp.2}
\end{align}
Note that, for the payoff \eqref{eq:phi.exp.2}, we have $\Pi_0^\pm \notin \Rb$ when $\rho \neq 0$.
In Figure \ref{fig:rho-effect-exp-neg} we plot the real and imaginary parts of $\Pi_0^\pm$, $\Pi_0$ and $V_0$ as functions of $\rho$.
We observe from the figure that $\Re \Pi_0^+ = \Re \Pi_0^-$ and that $\Im \Pi_0^+ = - \Im \Pi_0^-$.  In fact, this can be proven mathematically.
It is the imaginary part of $\Pi_0^\pm$, together with the imaginary part of $\alpha^\pm$, that contribute to $\Pi_0$ being better approximation of $V_0$ than $\Re \Pi_0^\pm$.

In Figure \ref{fig:sample-path-exp-neg} we plot sample paths of $\Re \Pi^\pm$, $\Pi$ and $V$ for $\rho = \{-0.99, -0.66, 0, 0.99\}$.
From the figure, we see that when $\rho = 0$ all three replication strategies $\Re \Pi^+$, $\Pi^-$ and $\Pi$ closely track the true value $V$ of the derivative.
Note once again that, because the results of Section \ref{sec:ind} are \textit{exact}, any tracking error in the $\rho = 0$ case is \textit{entirely} due to the discretization error associated with the Euler-Maruyama scheme.  
When $\rho > 0$ we observe that $\Re \Pi^\pm > V$ while when $\rho < 0$ we observe that $\Re \Pi^\pm < V$.  

In Figure \ref{fig:hist-exp-neg} we plot histograms of the hedging errors $\Re \eps^\pm$ and $\eps$ for $\rho = \{-0.99, -0.66, 0, 0.66, 0.99\}$.
It is clear from the histograms that, for all values of $\rho$, the hedging error of the correlation immunization strategy $\eps$ is centered near zero (though the mean is always positive when $\rho \neq 0$), whereas the hedging errors without immunization $\eps^\pm$ are typically centered away from zero (with means that are opposite the sign of $\rho$.).  
Summary statistics of the Monte Carlo simulations are provided in Table \ref{tbl:stats-exp-neg}.  
The table confirms what we observe visually from the figures; for each of the values of $\rho$ we tested, the hedging strategy that best approximates $V$ is the correlation immunization strategy $\Pi$.
Overall, compared to the basic strategies $\Pi^\pm$, it is clear that the correlation immunization strategy $\Pi$ reduces pricing and hedging errors for all values of $\rho$.


\subsection{Approximate put option on realized variance}
\label{sec:put}

A put option on realized variance with strike $K$ has a payoff $( K - \<X\>_T )^+$.  Carr and Lee provide a method of approximating put payoffs uniformly using Bernstein polynomials with exponential arguments.   We repeat Proposition 7.12 of \cite{rrvd} here, slightly modified for our purposes.

\begin{proposition}
\label{thm:bernstein}
Consider a payoff function $h \in C[0,\infty]$ such that $\lim_{v \to \infty} h(v)$ exists.
Define $h^* : [0,1] \to \Rb$ by $h^*(0) := h(\infty)$ and $h^*(x) := h( - (1/c) \log x )$ for $x>0$.  
Define $B_n$, the $n^\textup{th}$ Bernstein approximation of $h^*$ by
\begin{align}
B_n(x)
	&:=	\sum_{k=0}^n b_{n,k} x^k , &
b_{n,k}
	:=	\sum_{j=0}^k h^*(j/n) \binom{n}{k} \binom{k}{j}(-1)^{j-k} .
\end{align}
Then we have
\begin{align}
h(v)
	&=	\lim_{n \to \infty} B_n(\ee^{-c v}) ,
\end{align}
uniformly in $v \in [0,\infty)$.
\end{proposition}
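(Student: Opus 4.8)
The plan is to recognize Proposition \ref{thm:bernstein} as nothing more than the classical Bernstein--Weierstrass approximation theorem on $[0,1]$, transported to the half-line $[0,\infty)$ by the change of variables $x = \ee^{-cv}$ (with $c>0$, so that the map carries $(0,1]$ onto $[0,\infty)$).

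First I would check that $h^*$ is continuous on the \emph{closed} interval $[0,1]$, since this is exactly the hypothesis under which the Bernstein theorem applies. On $(0,1]$ the map $x \mapsto -(1/c)\log x$ is continuous with values in $[0,\infty)$, so $h^*$ is continuous there as a composition of continuous functions. At the endpoint $x=0$, continuity is where the assumption that $\lim_{v\to\infty} h(v)$ exists is used: because $-(1/c)\log x \to \infty$ as $x \to 0^+$, we have $\lim_{x\to 0^+} h^*(x) = h(\infty) = h^*(0)$. Hence $h^* \in C[0,1]$.

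Second, and this is the one step that requires genuine bookkeeping, I would verify that the $B_n$ written in the statement is precisely the standard Bernstein polynomial of $h^*$ re-expressed in the monomial basis. The classical Bernstein operator is
\begin{equation*}
\beta_n(x) = \sum_{j=0}^n h^*(j/n)\binom{n}{j}x^j(1-x)^{n-j}.
\end{equation*}
Expanding $(1-x)^{n-j}$ by the binomial theorem and collecting powers of $x$, the coefficient of $x^k$ in $\beta_n$ is $\sum_{j=0}^k h^*(j/n)\binom{n}{j}\binom{n-j}{k-j}(-1)^{k-j}$. It then remains only to invoke the elementary identity $\binom{n}{j}\binom{n-j}{k-j} = \binom{n}{k}\binom{k}{j}$ (both sides equal $n!/(j!\,(k-j)!\,(n-k)!)$), together with $(-1)^{k-j}=(-1)^{j-k}$; these turn that coefficient into exactly the $b_{n,k}$ of the statement, so $B_n = \beta_n$.

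Third, the Bernstein theorem applied to $h^* \in C[0,1]$ yields $B_n \to h^*$ uniformly on $[0,1]$, and I would finish by pushing this estimate back to the $v$-variable. Since $h^*(\ee^{-cv}) = h\bigl(-(1/c)\log \ee^{-cv}\bigr) = h(v)$ and $\{\ee^{-cv} : v \in [0,\infty)\} = (0,1] \subseteq [0,1]$, one has
\begin{equation*}
\sup_{v \geq 0}\bigl| B_n(\ee^{-cv}) - h(v)\bigr| = \sup_{x \in (0,1]}\bigl| B_n(x) - h^*(x)\bigr| \leq \sup_{x\in[0,1]}\bigl|B_n(x) - h^*(x)\bigr| \longrightarrow 0 ,
\end{equation*}
which is the asserted uniform convergence on $[0,\infty)$. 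The main obstacle is the combinatorial identification in the second step, but it reduces to the single binomial-coefficient identity above; the conceptual content is confined to the continuity of $h^*$ at $x=0$ and the transfer of uniformity, both of which are routine once the substitution $x=\ee^{-cv}$ is in place.
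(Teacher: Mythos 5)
Your proof is correct, and it is essentially the argument the paper delegates to the literature: the paper's ``proof'' is just a citation to \cite[Proposition 7.12]{rrvd}, whose underlying idea is exactly your change of variables $x=\ee^{-cv}$ combined with the Bernstein--Weierstrass theorem on $[0,1]$. Your write-up correctly supplies the details that citation leaves implicit --- continuity of $h^*$ at $x=0$ via the existence of $h(\infty)$, the binomial-coefficient identity showing $B_n$ is the Bernstein polynomial in the monomial basis, and the transfer of the uniform estimate back to $[0,\infty)$.
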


\begin{proof}
See the proof of Proposition 7.12 of \cite{rrvd}.
\end{proof}

In this section, we consider an \textit{approximate} put payoff
\begin{align}
\varphi( \<X\>_T )
	&=	B_n(\ee^{- c \<X\>_t} ) , &
h(v)
	&=	( K - v )^+ , &
K
	&=	0.04 , &
c
	&=	10 , &
n
	&=	20 , \label{eq:put.payoff}
\end{align}
where $B_n$ is defined from $h$ as described in Proposition \ref{thm:bernstein}.
Note that $\varphi(\<X\>_T)$ is a payoff of the form \eqref{eq:linear-combo}.  
In Figure \ref{fig:approx-put}, in order to see how well the put approximation performs in the region where $\<X\>_T$ is likely to be, we plot $\varphi( \<X\>_T )$, given by \eqref{eq:put.payoff}, and $(K - \<X\>_T )^+$ as functions of $\<X\>_T$ as well as the density of $\<X\>_T$, which can be computed numerically from \eqref{eq:qv.char}.
In general, we see good agreement between $\varphi(\<X\>_T)$ and $(K - \<X\>_T)^+$, with the approximation deteriorating slightly near $\<X\>_T = K$.

Note that, for the payoff \eqref{eq:put.payoff}, we have $\Pi_0^\pm \notin \Rb$ when $\rho \neq 0$.
In Figure \ref{fig:rho-effect-put} we plot the real and imaginary parts of $\Pi_0^\pm$, $\Pi_0$ and $V_0$ as functions of $\rho$.
Once again, we observe from the figure that $\Re \Pi_0^+ = \Re \Pi_0^-$ and that $\Im \Pi_0^+ = - \Im \Pi_0^-$; 
this is due to the fact that $\varphi(\<X\>_T)$ is a linear combination of decreasing exponentials.
Interestingly, while $\Pi_0$ provides a better approximation of $V_0$ than $\Re \Pi_0^\pm$ when $\rho < 0$, we see that $\Re \Pi_0^\pm$ provides a better approximation of $V_0$ than $\Pi_0$ for some values of $\rho > 0$.

In Figure \ref{fig:sample-path-put} we plot sample paths of $\Re \Pi^\pm$, $\Pi$ and $V$ for $\rho = \{-0.99, -0.66, 0, 0.99\}$.
Consistent with Figure \ref{fig:rho-effect-put}, we see that both $\Pi$ and $\Re \Pi^\pm$ over-replicate $V$.
When $\rho < 0$, the correlation immune portfolio value $\Pi$ more closely tracks the true derivative value $V$ than do the real parts of the basic replication strategies $\Re \Pi^\pm$.  However, the improvement is not as drastic as for the increasing exponential payoff described in Section \ref{sec:exp}.
When $\rho > 0$, the real part of the basic replication strategy values $\Re \Pi^\pm$ more closely track the true derivative value $V$ than does the correlation immune portfolio value $\Pi$.

In Figure \ref{fig:hist-put} we plot histograms of the hedging errors $\Re \eps^\pm$ and $\eps$ for $\rho = \{-0.99, -0.66, 0, 0.99\}$.
When $\rho \neq 0$ we observe that all three replicating strategies ($\Pi$ and $\Re \Pi^\pm$) tend to over-replicate the option payoff.
When $\rho \leq 0$, the correlation immunized strategy $\Pi$ has a smaller hedging error (on average) than the basic replication strategies $\Re \Pi^\pm$.  
However, when $\rho \geq 0$, the basic replication strategies $\Re \Pi^\pm$ have a smaller hedging error (on average) than the correlation immunized strategy $\Pi$.
Summary statistics of the Monte Carlo simulations are provided in Table \ref{tbl:stats-put}.
Overall, use of the correlation immunization strategy $\Pi$ is recommended when $\rho \leq 0$.


\subsection{Approximate floating leg of a volatility swap}
\label{sec:vol-swap}

Carr and Lee note in the proof of Proposition 6.6 of \cite{rrvd} that the floating leg of a volatility swap can be written as an integral transform
\begin{align}
\varphi(\<X\>_T)
	&:=	\sqrt{\<X\>_T}
	 =	\frac{1}{\sqrt{2 \pi}} \int_0^\infty \frac{1 - \ee^{-z \<X\>_T}}{z^{3/2}} \dd z , \label{eq:vol.swap}
\end{align}
and use this to develop pricing and replication strategies (both robust and basic) for a derivative that pays the square root of realized volatility.
In this section, we consider the following \textit{approximate} square root payoff
\begin{align}
\varphi( \<X\>_T )
	&=	B_n(\ee^{- c \<X\>_t} ) , &
h(v)
	&=	\sqrt{v} , &
c
	&=	{10} , &
n
	&=	{20} , \label{eq:swap.payoff}
\end{align}
where $B_n$ is defined from $h$ as described in Proposition \ref{thm:bernstein}.
Note that the payoff in \eqref{eq:swap.payoff}, unlike the payoff in \eqref{eq:vol.swap}, is of the form \eqref{eq:linear-combo}.
In Figure \ref{fig:approx-swap}, in order to see how well the square root approximation performs, 
we plot $\varphi( \<X\>_T )$, given by \eqref{eq:swap.payoff}, and $\sqrt{\<X\>_T}$ as functions of $\<X\>_T$ as well as a density of $\<X\>_T$.
In general, we see good agreement between $\varphi(\<X\>_T)$ and $\<X\>_T$, especially in the regions where the density of $\<X\>_T$ is largest.

Note that, for the payoff \eqref{eq:put.payoff}, we have $\Pi_0^\pm \notin \Rb$ when $\rho \neq 0$.
In Figure \ref{fig:rho-effect-swap} we plot the real and imaginary parts of $\Pi_0^\pm$, $\Pi_0$ and $V_0$ as functions of $\rho$.
Once again, we observe from the figure that $\Re \Pi_0^+ = \Re \Pi_0^-$ and that $\Im \Pi_0^+ = - \Im \Pi_0^-$; 
this is due to the fact that $\varphi(\<X\>_T)$ is a linear combination of decreasing exponentials.
We note that when $\rho < 0$ we have $\Re \Pi_0^\pm < V_0$ and when $\rho > 0$ we have $\Re \Pi_0^\pm > V_0$.
We also see that $\Pi_0$ provides a better approximation of $V_0$ than $\Re \Pi_0^\pm$ for all values of $\rho$.

In Figure \ref{fig:sample-path-swap} we plot sample paths of $\Re \Pi^\pm$, $\Pi$ and $V$ for $\rho = \{-0.99, -0.66, 0, 0.99\}$.
Consistent with Figure \ref{fig:rho-effect-put}, we see that both $\Pi$ over-replicate $V$ when $\rho \neq 0$.
We also see that $\Re \Pi^\pm$ over-replicates $V$ when $\rho > 0$ and under-replicates $V$ when $\rho$ is negative.
For the particular sample path plotted, the correlation immunized portfolio value $\Pi$ more closely tracks $V$ than do the real parts of the basic portfolio values $\Re \Pi^\pm$ except when $\rho = -0.99$.  However, for other sample paths (not pictured), we have observed that $\Pi$ more closely tracks $V$ than do the real parts of the basic portfolio values $\Re \Pi^\pm$ for all values of $\rho$.

In Figure \ref{fig:hist-swap} we plot histograms of the hedging errors $\Re \eps^\pm$ and $\eps$ for $\rho = \{-0.99, -0.66, 0, 0.99\}$.
Consistent with Figure \ref{fig:rho-effect-swap} we see that the basic replicating portfolio values $\Re \Pi_T^\pm$ tend to under-replicate (over-replicate) the payoff $\varphi(\<X\>_T)$ when $\rho < 0$ ($\rho > 0$).  When $\rho \neq 0$ the correlation immunized portfolio value $\Pi_T$ slightly over-replicates the payoff $\varphi(\<X\>_T)$.
Though, for all values of $\rho$, the average hedging error of the correlation immunized portfolio value $\Pi_T$ is smaller than the hedging errors of the real parts of the basic heading portfolio values $\Re \Pi_T^\pm$.    Summary statistics of the Monte Carlo simulations are provided in Table \ref{tbl:stats-swap}.
Overall, compared to the value of the basic strategies $\Pi^\pm$, it is clear that the value of the correlation immunization strategy $\Pi$ dramatically reduces pricing and hedging errors for all values of $\rho$.

%
%

\section{Conclusion}
\label{sec:conclusion}
In this paper we have presented the results of a numerical investigation of Carr and Lee's correlation immunization strategy for volatility derivatives.  The results of our investigation confirm that the correlation immunization strategy is an effective way to minimize pricing and hedging errors that result from nonzero correlation between the underlying asset and the volatility process.  Additionally, we have proved that value of the correlation immunized portfolio is real-valued when the derivative payoff is real-valued.  This result provides further motivation to use the correlation immunized pricing and replication strategies rather than the basic strategies.

\subsection*{Acknowledgments}
The authors would like to express their gratitude to the anonymous referees and editor, whose feedback greatly improved this paper.  In particular, one referee provided the very concise proof of Proposition \ref{thm:real}, which appears in this version of the paper.

%
%

\bibliographystyle{chicago}
\bibliography{Bibtex-Master-3.05}

\begin{figure}[ht]
\centering
\begin{tabular}{c}
\includegraphics[width=0.45\textwidth]{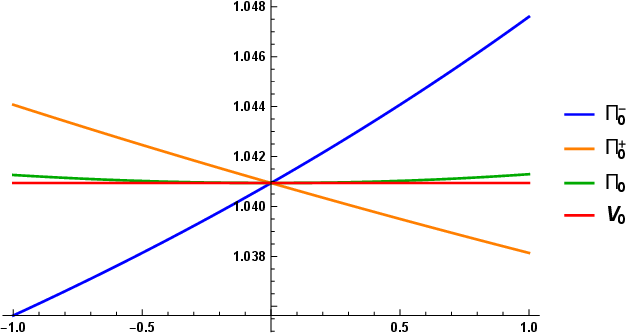} 
\end{tabular}
\caption{
A plot of $\Pi_0^\pm$, $\Pi_0$ and $V_0$ as functions of $\rho$ for the volatility derivative with payoff \eqref{eq:phi.exp}.
}
\label{fig:rho-effect-exp-pos}
\end{figure}

\begin{figure}[ht]
\centering
\begin{tabular}{cc}
\includegraphics[width=0.45\textwidth]{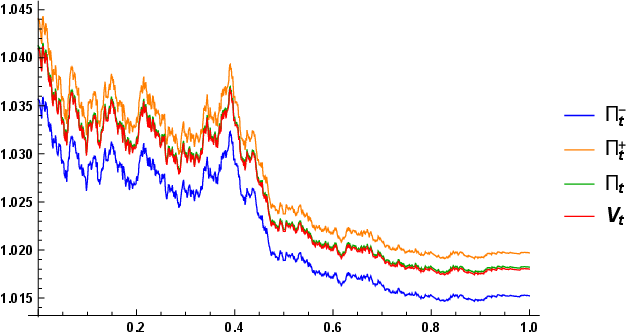} & 
\includegraphics[width=0.45\textwidth]{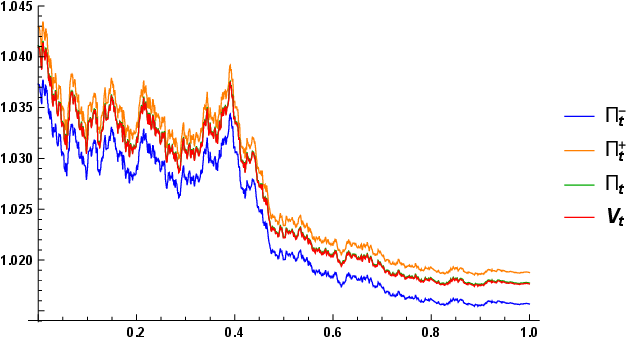}  \\
$\rho=-0.99$ & $\rho=-0.66$ \\
\includegraphics[width=0.45\textwidth]{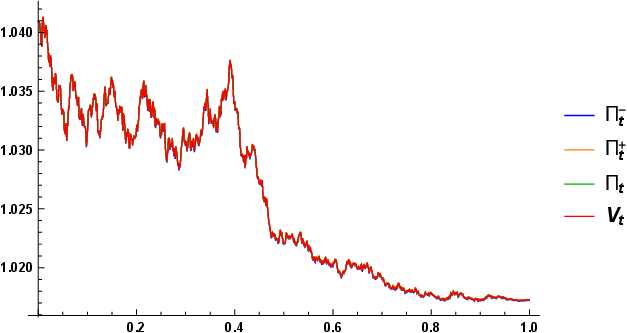} &
\includegraphics[width=0.45\textwidth]{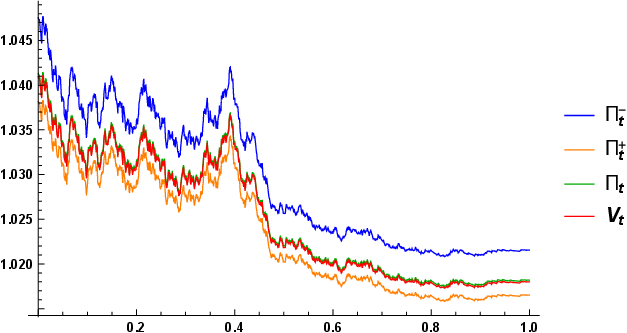} \\
$\rho=0$ & $\rho=0.99$
\end{tabular}
\caption{
Sample paths of $\Pi^\pm$, $\Pi$ and $V$ for the volatility derivative with payoff \eqref{eq:phi.exp}.
}
\label{fig:sample-path-exp-pos}
\end{figure}

\begin{figure}[ht]
\centering
\begin{tabular}{cc}
\includegraphics[width=0.45\textwidth]{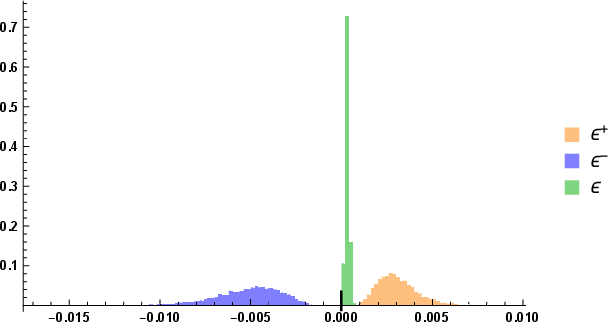} & 
\includegraphics[width=0.45\textwidth]{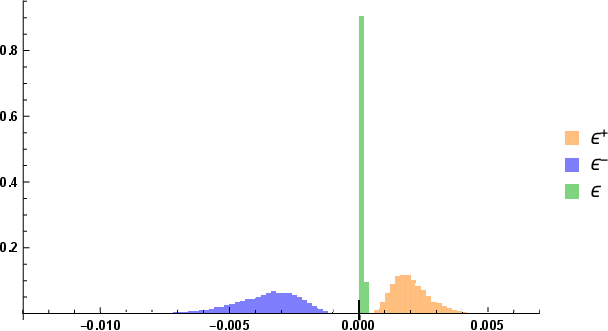}  \\
$\rho=-0.99$ & $\rho=-0.66$ \\
\includegraphics[width=0.45\textwidth]{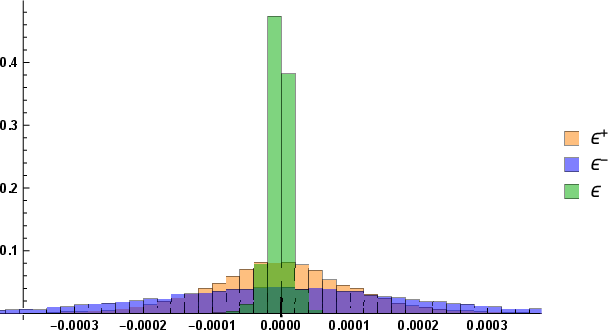} &
\includegraphics[width=0.45\textwidth]{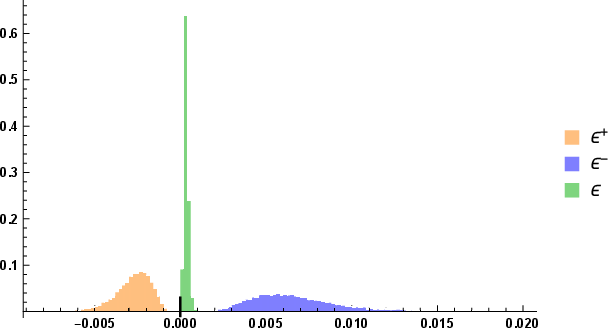} \\
$\rho=0$ & $\rho=0.99$
\end{tabular}
\caption{
Probability histogram of hedging errors $\eps^\pm$ and $\eps$ for the volatility derivative with payoff \eqref{eq:phi.exp}.
}
\label{fig:hist-exp-pos}
\end{figure}


\begin{figure}[ht]
\centering
\begin{tabular}{cc}
\includegraphics[width=0.45\textwidth]{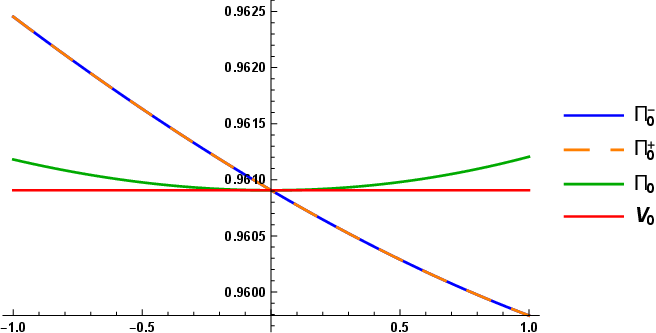} &
\includegraphics[width=0.45\textwidth]{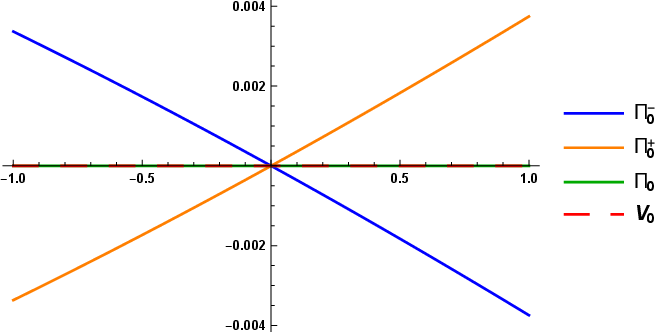} \\
Real part & Imaginary part
\end{tabular}
\caption{
A plot of the real and imaginary parts of $\Pi_0^\pm$, $\Pi_0$ and $V_0$ as functions of $\rho$ for the volatility derivative with payoff \eqref{eq:phi.exp.2}.
}
\label{fig:rho-effect-exp-neg}
\end{figure}

\begin{figure}[ht]
\centering
\begin{tabular}{cc}
\includegraphics[width=0.45\textwidth]{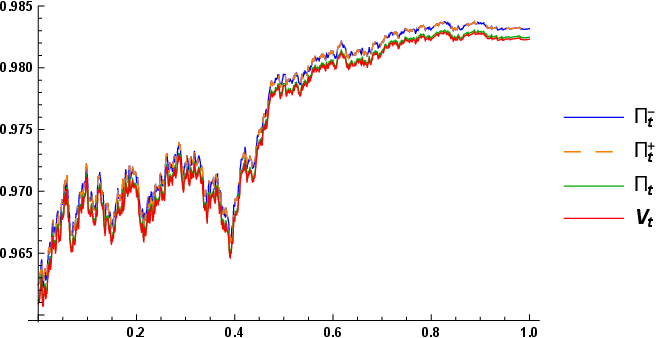} & 
\includegraphics[width=0.45\textwidth]{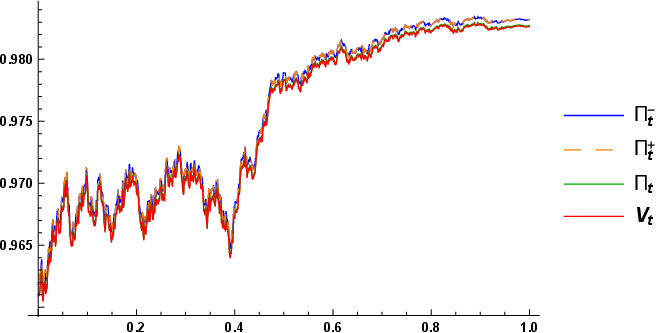}  \\
$\rho=-0.99$ & $\rho=-0.66$ \\
\includegraphics[width=0.45\textwidth]{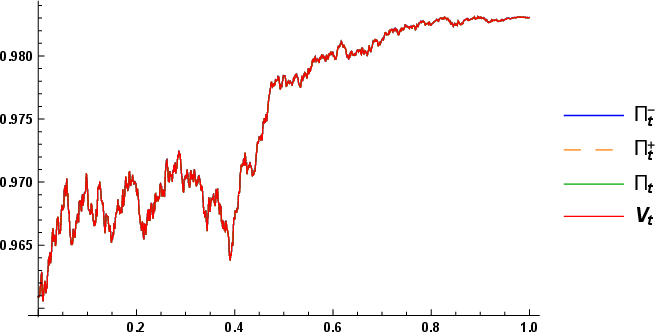} &
\includegraphics[width=0.45\textwidth]{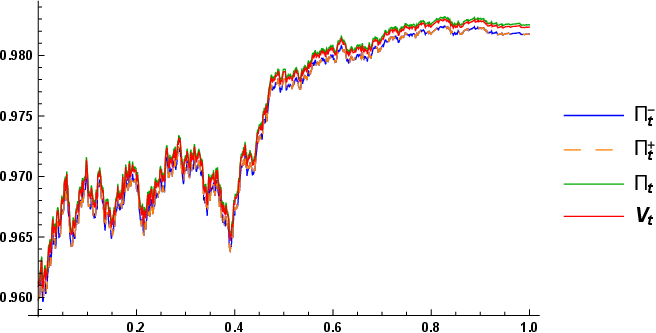} \\
$\rho=0$ & $\rho=0.99$
\end{tabular}
\caption{
Sample paths of $\Re \Pi^\pm$, $\Pi$ and $V$ for the volatility derivative with payoff \eqref{eq:phi.exp.2}.
}
\label{fig:sample-path-exp-neg}
\end{figure}

\begin{figure}[ht]
\centering
\begin{tabular}{cc}
\includegraphics[width=0.45\textwidth]{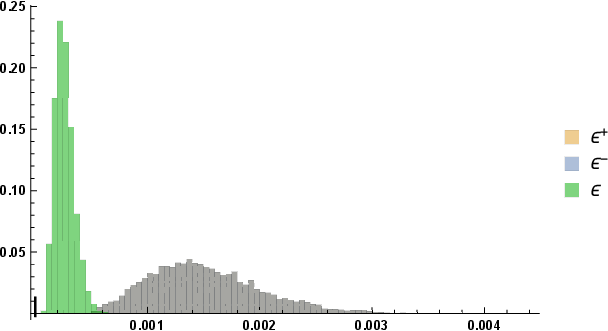} & 
\includegraphics[width=0.45\textwidth]{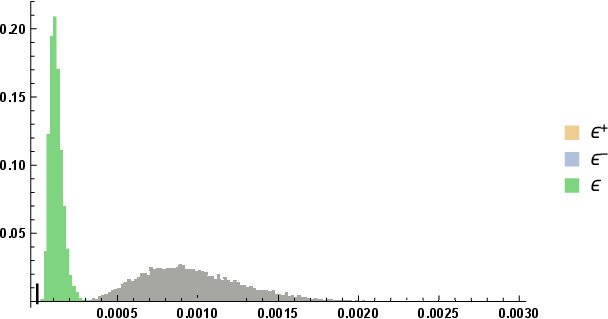}  \\
$\rho=-0.99$ & $\rho=-0.66$ \\
\includegraphics[width=0.45\textwidth]{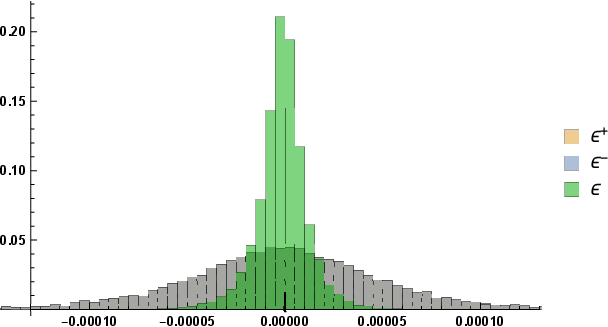} &
\includegraphics[width=0.45\textwidth]{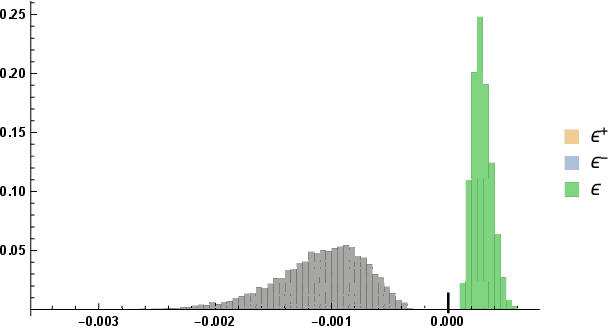} \\
$\rho=0$ & $\rho=0.99$
\end{tabular}
\caption{
Probability histogram of hedging errors $\Re \eps^\pm$ and $\eps$ for the volatility derivative with payoff \eqref{eq:phi.exp.2}.
Note that the gray histogram results from $\Re \eps^+ = \Re \eps^-$.
}
\label{fig:hist-exp-neg}
\end{figure}


\begin{figure}[ht]
\centering
\begin{tabular}{c}
\includegraphics[width=0.6 \textwidth]{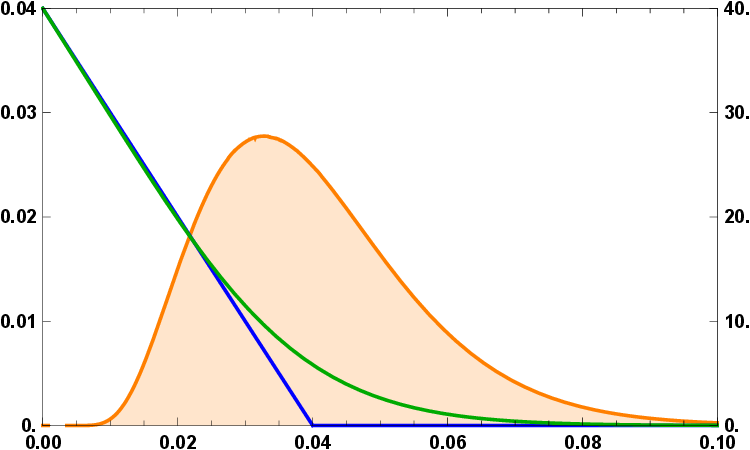}
\end{tabular}
\caption{
A plot of a put payoff $(K - \<X\>_T)^+$ (blue) and its approximation $\varphi(\<X\>_T)$, given by \eqref{eq:put.payoff} (green).
In the background, we plot the probability density of $\<X\>_T$.
The scale of the payoff functions is given on the left vertical axis and the scale of the density is given on the right vertical axis.
}
\label{fig:approx-put}
\end{figure}

\begin{figure}[ht]
\centering
\begin{tabular}{cc}
\includegraphics[width=0.45\textwidth]{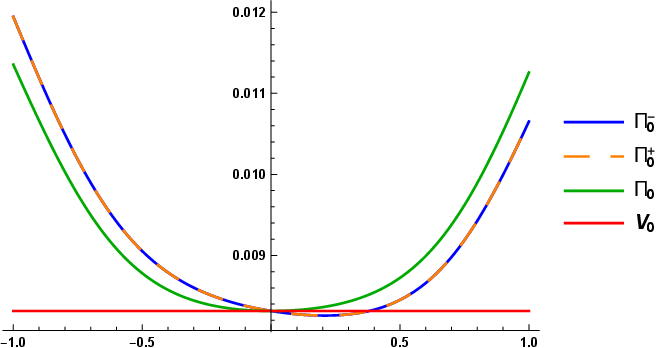} &
\includegraphics[width=0.45\textwidth]{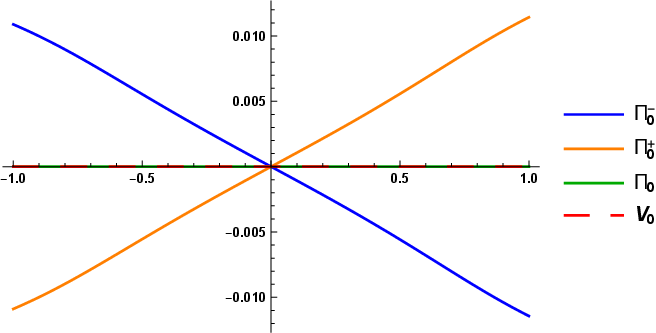} \\
Real part & Imaginary part
\end{tabular}
\caption{
A plot of the real and imaginary parts of $\Pi_0^\pm$, $\Pi_0$ and $V_0$ as functions of $\rho$ for the volatility derivative with payoff \eqref{eq:put.payoff}.
}
\label{fig:rho-effect-put}
\end{figure}

\begin{figure}[ht]
\centering
\begin{tabular}{cc}
\includegraphics[width=0.45\textwidth]{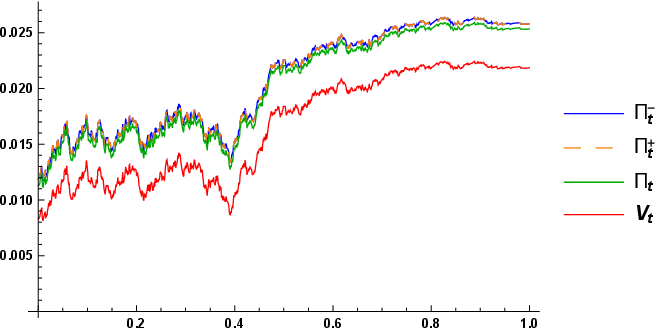} & 
\includegraphics[width=0.45\textwidth]{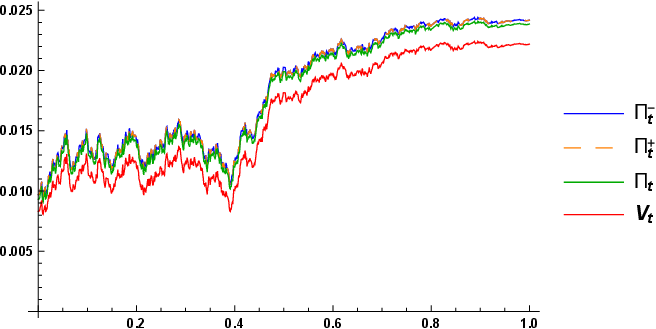}  \\
$\rho=-0.99$ & $\rho=-0.66$ \\
\includegraphics[width=0.45\textwidth]{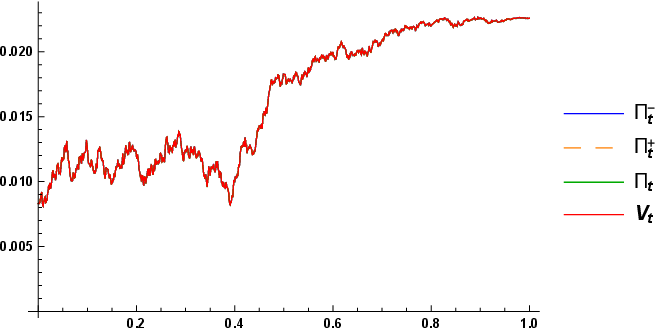} &
\includegraphics[width=0.45\textwidth]{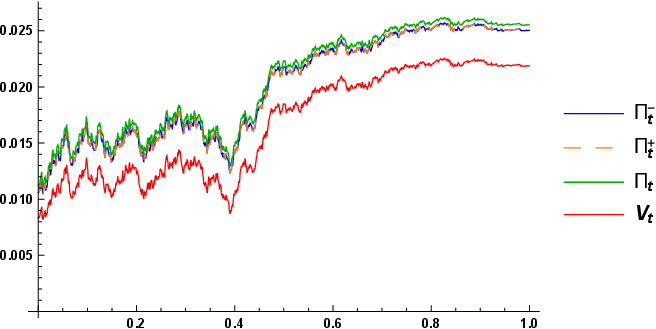} \\
$\rho=0$ & $\rho=0.99$
\end{tabular}
\caption{
Sample paths of $\Re \Pi^\pm$, $\Pi$ and $V$ for the volatility derivative with payoff \eqref{eq:put.payoff}.
}
\label{fig:sample-path-put}
\end{figure}

\begin{figure}[ht]
\centering
\begin{tabular}{cc}
\includegraphics[width=0.45\textwidth]{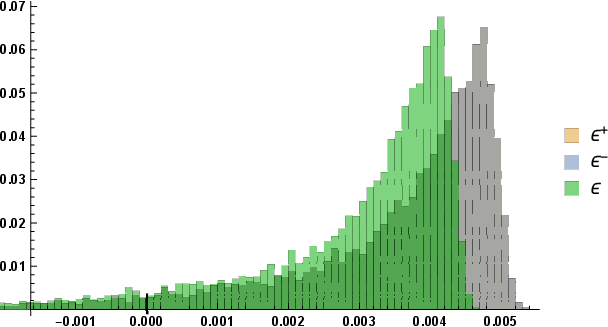} & 
\includegraphics[width=0.45\textwidth]{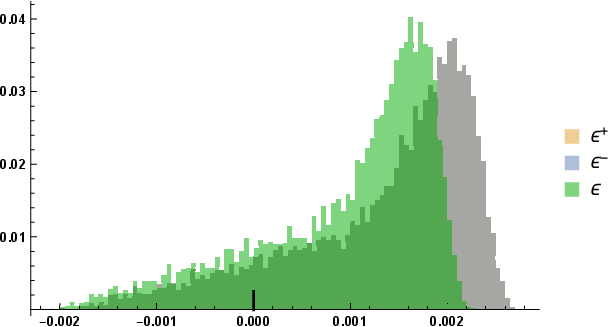}  \\
$\rho=-0.99$ & $\rho=-0.66$ \\
\includegraphics[width=0.45\textwidth]{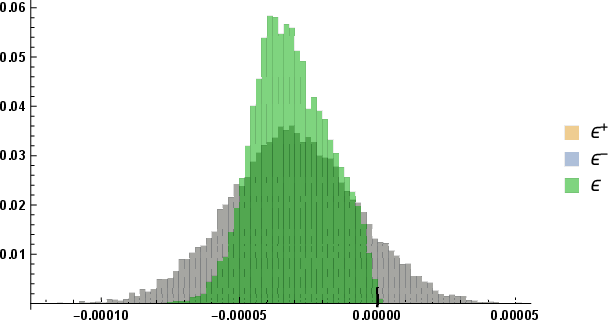} &
\includegraphics[width=0.45\textwidth]{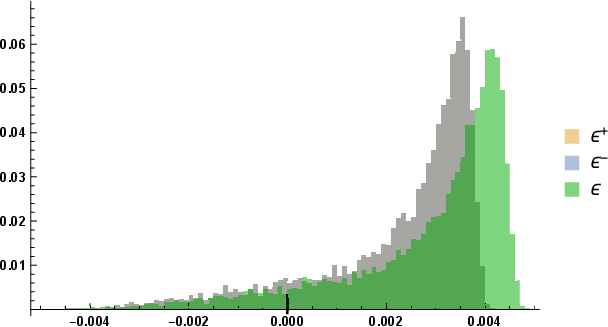} \\
$\rho=0$ & $\rho=0.99$
\end{tabular}
\caption{
Probability histogram of hedging errors $\Re \eps^\pm$ and $\eps$ for the volatility derivative with payoff \eqref{eq:put.payoff}.
Note that the gray histogram results from $\Re \eps^+ = \Re \eps^-$.
}
\label{fig:hist-put}
\end{figure}


\begin{figure}[ht]
\centering
\begin{tabular}{c}
\includegraphics[width=0.6 \textwidth]{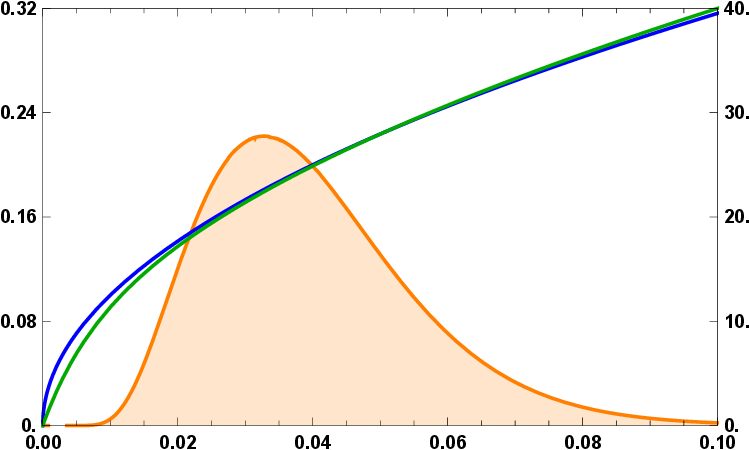}
\end{tabular}
\caption{
A plot of a square root payoff $\sqrt{\<X\>_T}$ (blue) and its approximation $\varphi(\<X\>_T)$, given by \eqref{eq:swap.payoff} (green).
In the background, we plot the probability density function of $\<X\>_T$.
The vertical axis on the left gives the scale of the payoffs and the vertical axis on the right gives the scale of the density.
}
\label{fig:approx-swap}
\end{figure}

\begin{figure}[ht]
\centering
\begin{tabular}{cc}
\includegraphics[width=0.45\textwidth]{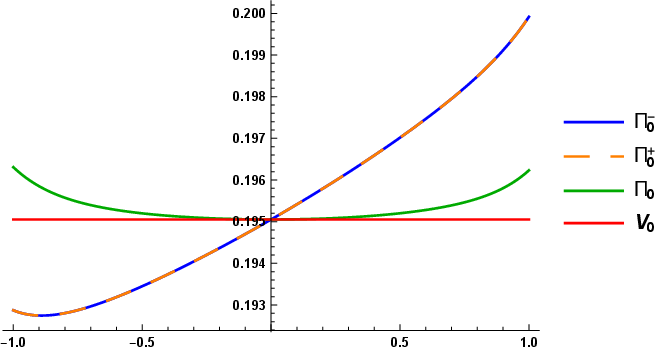} &
\includegraphics[width=0.45\textwidth]{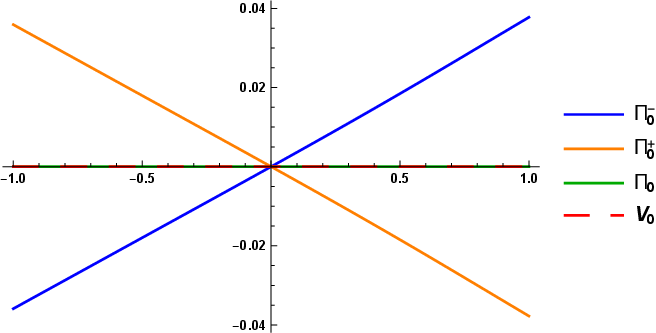} \\
Real part & Imaginary part
\end{tabular}
\caption{
A plot of the real and imaginary parts of $\Pi_0^\pm$, $\Pi_0$ and $V_0$ as functions of $\rho$ for the volatility derivative with payoff \eqref{eq:swap.payoff}.
}
\label{fig:rho-effect-swap}
\end{figure}

\begin{figure}[ht]
\centering
\begin{tabular}{cc}
\includegraphics[width=0.45\textwidth]{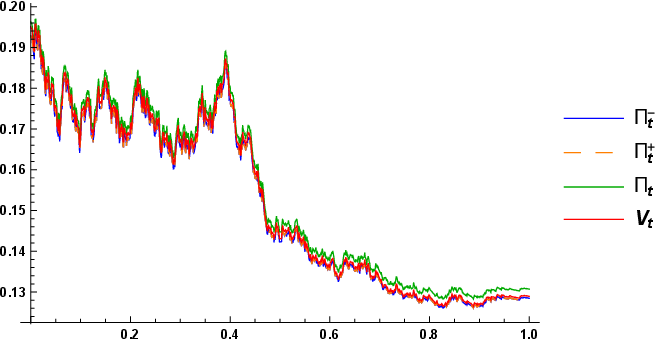} & 
\includegraphics[width=0.45\textwidth]{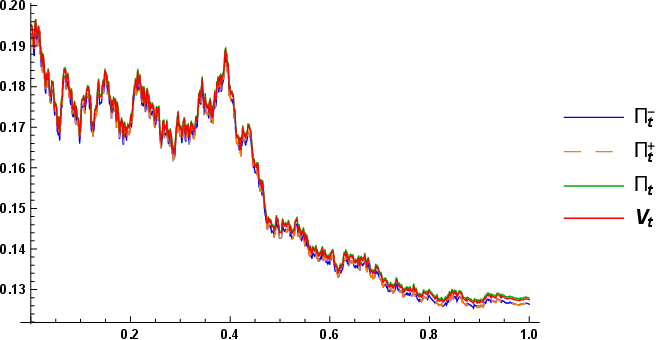}  \\
$\rho=-0.99$ & $\rho=-0.66$ \\
\includegraphics[width=0.45\textwidth]{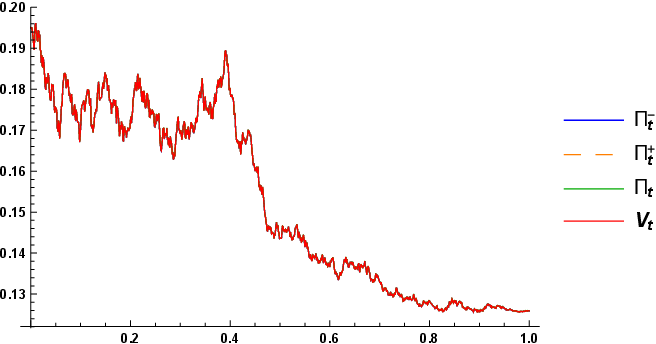} &
\includegraphics[width=0.45\textwidth]{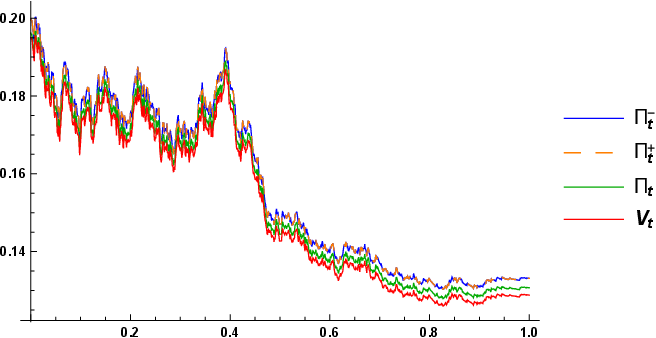} \\
$\rho=0$ & $\rho=0.99$
\end{tabular}
\caption{
Sample paths of $\Re \Pi^\pm$, $\Pi$ and $V$ for the volatility derivative with payoff \eqref{eq:swap.payoff}.
}
\label{fig:sample-path-swap}
\end{figure}

\begin{figure}[ht]
\centering
\begin{tabular}{cc}
\includegraphics[width=0.45\textwidth]{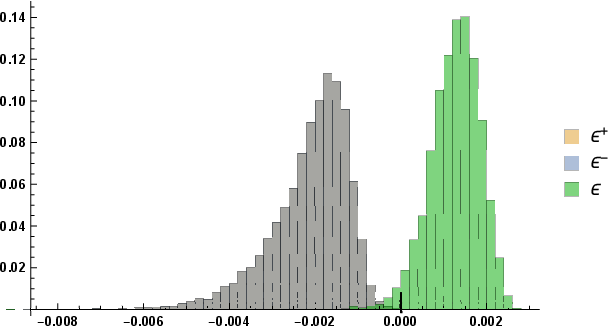} & 
\includegraphics[width=0.45\textwidth]{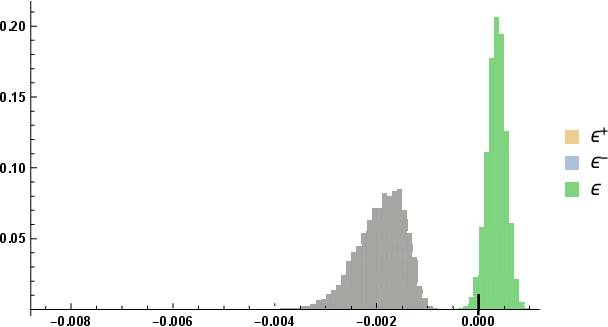}  \\
$\rho=-0.99$ & $\rho=-0.66$ \\
\includegraphics[width=0.45\textwidth]{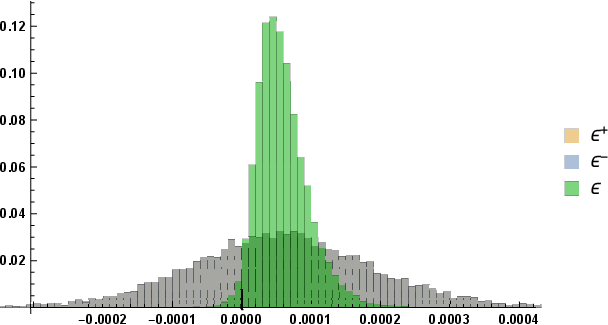} &
\includegraphics[width=0.45\textwidth]{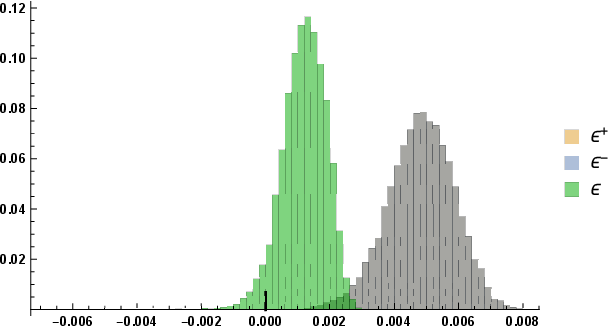} \\
$\rho=0$ & $\rho=0.99$
\end{tabular}
\caption{
Probability histogram of hedging errors $\Re \eps^\pm$ and $\eps$ for the volatility derivative with payoff \eqref{eq:swap.payoff}.
Note that the gray histogram results from $\Re \eps^+ = \Re \eps^-$.
}
\label{fig:hist-swap}
\end{figure}

%
%

\clearpage


\begin{table}[ht]
\centering
\begin{tabular}{c|ccccc}
\hline
                       & $\rho=-0.99$ & $\rho=-0.66$ & $\rho=0$  & $\rho=0.66$ & $\rho=0.99$ \\ \hline
$\widehat{\eps}^-$ 		 & -5.23E-03    & -3.64E-03    & -4.08E-06 & 4.24E-03    & 6.57E-03    \\
$\widehat{\eps}$   		 & 3.10E-04     & 1.39E-04     & -2.80E-06 & 1.49E-04    & 3.43E-04    \\
$\widehat{\eps}^+$ 		 & 3.08E-03     & 2.03E-03     & -2.16E-06 & -1.90E-03   & -2.77E-03   \\ \hline
$\widehat{\sigma}^-$   & 1.91E-03     & 1.35E-03     & 2.29E-04  & 1.54E-03    & 2.40E-03    \\
$\widehat{\sigma}$     & 9.55E-05     & 4.47E-05     & 1.52E-05  & 5.52E-05    & 1.21E-04    \\
$\widehat{\sigma}^+$   & 1.09E-03     & 7.36E-04     & 1.14E-04  & 6.91E-04    & 1.02E-03    \\ \hline
\end{tabular}
\caption{
Sample means and standard deviations of hedging errors $\eps^\pm$ and $\eps$ for the volatility derivative with payoff \eqref{eq:phi.exp}.
}
\label{tbl:stats-exp-pos}
\end{table}


\begin{table}[ht]
\centering
\begin{tabular}{c|ccccc}
\hline
                       & $\rho=-0.99$ & $\rho=-0.66$ & $\rho=0$  & $\rho=0.66$ & $\rho=0.99$ \\ \hline
$\Re \widehat{\eps}^-$ & 1.52E-03      & 9.76E-04  & -1.01E-06 & -7.98E-04 & -1.11E-03       \\
$\widehat{\eps}$   		 & 2.66E-04      & 1.20E-04  & -1.31E-06 & 1.27E-04  & 2.90E-04        \\
$\Re \widehat{\eps}^+$ & 1.52E-03      & 9.76E-04  & -1.01E-06 & -7.98E-04 & -1.11E-03       \\ \hline
$\Re \widehat{\sigma}^-$   & 5.09E-04      & 3.36E-04  & 5.39E-05  & 2.83E-04  & 4.01E-04        \\
$\widehat{\sigma}$     & 8.53E-05      & 4.20E-05  & 1.34E-05  & 3.66E-05  & 8.22E-05        \\
$\Re \widehat{\sigma}^+$   & 5.09E-04      & 3.36E-04  & 5.39E-05  & 2.83E-04  & 4.01E-04        \\ \hline
\end{tabular}
\caption{
Sample means and standard deviations of hedging errors $\Re \eps^\pm$ and $\eps$ for the volatility derivative with payoff \eqref{eq:phi.exp.2}.
}
\label{tbl:stats-exp-neg}
\end{table}


\begin{table}[ht]
\centering
\begin{tabular}{c|ccccc}
\hline
                       & $\rho=-0.99$ & $\rho=-0.66$ & $\rho=0$  & $\rho=0.66$ & $\rho=0.99$ \\ \hline
$\Re \widehat{\eps}^-$ & 3.54E-03      & 1.33E-03  & -3.08E-05 & 4.98E-04 & 2.26E-03         \\
$\widehat{\eps}$   & 2.95E-03      & 9.57E-04  & -3.09E-05 & 8.70E-04 & 2.86E-03         \\
$\Re \widehat{\eps}^+$ & 3.54E-03      & 1.33E-03  & -3.08E-05 & 4.98E-04 & 2.26E-03         \\ \hline
$\Re \widehat{\sigma}^-$   & 1.51E-03      & 9.44E-04  & 2.33E-05  & 9.24E-04 & 1.63E-03         \\
$\widehat{\sigma}$     & 1.47E-03      & 9.02E-04  & 1.35E-05  & 9.71E-04 & 1.68E-03         \\
$\Re \widehat{\sigma}^+$   & 1.51E-03      & 9.44E-04  & 2.33E-05  & 9.24E-04 & 1.63E-03         \\ \hline
\end{tabular}
\caption{
Sample means and standard deviations of hedging errors $\Re \eps^\pm$ and $\eps$ for the volatility derivative with payoff \eqref{eq:put.payoff}.
}
\label{tbl:stats-put}
\end{table}


\begin{table}[ht]
\centering
\begin{tabular}{c|ccccc}
\hline
                       & $\rho=-0.99$ & $\rho=-0.66$ & $\rho=0$  & $\rho=0.66$ & $\rho=0.99$ \\ \hline
$\Re \widehat{\eps}^-$ & -2.13E-03     & -1.95E-03 & 5.79E-05 & 2.75E-03 & 4.82E-03         \\
$\widehat{\eps}$   & 1.26E-03      & 3.49E-04  & 5.92E-05 & 3.44E-04 & 1.18E-03         \\
$\Re \widehat{\eps}^+$ & -2.13E-03     & -1.95E-03 & 5.79E-05 & 2.75E-03 & 4.82E-03         \\ \hline
$\Re \widehat{\sigma}^-$   & 9.47E-04      & 5.18E-04  & 1.38E-04 & 6.53E-04 & 1.03E-03         \\
$\widehat{\sigma}$     & 5.93E-04      & 2.01E-04  & 3.73E-05 & 2.17E-04 & 7.03E-04         \\
$\Re \widehat{\sigma}^+$   & 9.47E-04      & 5.18E-04  & 1.38E-04 & 6.53E-04 & 1.03E-03         \\ \hline
\end{tabular}
\caption{
Sample means and standard deviations of hedging errors $\Re \eps^\pm$ and $\eps$ for the volatility derivative with payoff \eqref{eq:swap.payoff}.
}
\label{tbl:stats-swap}
\end{table}

\end{document}